\newif\ifmergedfigs
\title{Unlabeled Multi-Robot Motion Planning with Tighter Separation Bounds}
\author{Bahareh Banyassady}{Freie Universit\"at Berlin, Germany}{bahareh.banyassady@fu-berlin.de}
{0000-0002-3422-9028}
{}
\author{Mark de Berg}{TU Eindhoven, the Netherlands}{m.t.d.berg@tue.nl}
{0000-0001-5770-3784}
{}
\author{Karl Bringmann}{Saarland University and Max Planck Institute for Informatics, Germany}{bringmann@cs.uni-saarland.de}
{}
{}
\author{Kevin Buchin}{TU Dortmund, Germany}{kevin.buchin@tu-dortmund.de}
{000-0002-3022-7877}
{}
\author{Henning Fernau}{University of Trier, Germany}{fernau@uni-trier.de}
{0000-0002-4444-3220}
{}
\author{Dan Halperin}{Tel Aviv University, Israel}{danha@post.tau.ac.il}
{0000-0002-3345-3765}
{}
\author{Irina Kostitsyna}{TU Eindhoven, the Netherlands}{i.kostitsyna@tue.nl}
{0000-0003-0544-2257}
{}
\author{Yoshio Okamoto}{The University of Electro-Communications, Japan}{okamotoy@uec.ac.jp}
{0000-0002-9826-7074}
{}
\author{Stijn Slot}{Adyen, the Netherlands}{slot.stijn@gmail.com}
{}
{}
\authorrunning{B. Banyassady et al.}
\keywords{motion planning, separation bounds, computational geometry, simple polygon} 
\let\emptyset\varnothing
\newcommand{\R}{\mathbb{R}}
\newcommand{\W}{\mathcal{W}}
\newcommand{\F}{\mathcal{F}}
\newcommand{\D}{\mathcal{D}}
\newcommand{\weight}{charge\xspace}
\newcommand{\IS}{\mIS_{\rightarrow}}
\newcommand{\mIS}{\mathcal{I}}
\newcommand{\Fminus}{F^-}
\newcommand{\Dminus}{\D^-}
\begin{document}
	
	\maketitle

\begin{abstract}
We consider the unlabeled motion-planning problem of $m$ unit-disc robots moving in a simple polygonal workspace of $n$ edges.
The goal is to find a motion plan that moves the robots to a given set of $m$ target positions.
For the unlabeled variant, it does not matter which robot reaches which target position as long as all target positions are occupied in the end.

If the workspace has narrow passages such that the robots cannot fit through them, then the free configuration space, representing all possible unobstructed positions of the robots, 
will consist of multiple connected components.
Even if in each component of the free space the number of targets matches the number of start positions, the motion-planning problem does not always have a solution when the robots and their targets are positioned very densely.
In this paper, we prove tight bounds on how much separation between start and target positions is necessary to always guarantee a solution. 
Moreover, we describe an algorithm that always finds a solution in time $O(n \log n + mn + m^2)$ if the separation bounds are met. 
Specifically, we prove that the following separation is sufficient: any two start positions are 
at least distance~$4$ apart, any two target positions are at least distance~4 apart, and 
any pair of a start and a target positions is at least distance~$3$ apart. We further show that 
when the free space consists of a single connected component, the separation between start and target positions is not necessary.
\end{abstract}


\section{Introduction}
\label{sec:introduction}


Multi-robot systems are already playing a central role in manufacturing, warehouse logistics, inspection of large structures (e.g., bridges), monitoring of natural resources, and in the future they are expected to expand to other domains such as space exploration, search-and-rescue tasks and more.
One of the key ingredients necessary for endowing multi-robot systems with autonomy 
is the ability to plan collision-free motion paths for its constituent robots towards desired target positions.

In the basic multi-robot motion-planning (MRMP) problem several robots are operating in a common environment.
We are given a set of start positions and a set of desired target positions for these robots,
and we wish to compute motions that will bring the robots to the targets 
while avoiding collisions with obstacles and the other robots.
We distinguish between two variants of MRMP, \emph{labeled} and \emph{unlabeled}, depending on whether each robot has to reach a specific target.
In \emph{labeled} robot motion planning, each robot has a designated target position. 
In contrast, in the \emph{unlabeled} variant
, which we study here, each robot only needs to reach \emph{some} target position;
it does not matter which robot reaches which target as long as at the end each target position is occupied by a robot.

MRMP is an extension of the extensively studied \emph{single} robot motion-planning problem (see, e.g., \cite{choset2005principles,halperin2018algorithmic,lavalle2006planning}).
The multi-robot case is considerably harder~\cite{hearn2005pspace,hopcroft1984complexity,spirakis1984strong}, since the dimension of the \emph{configuration space} grows with the number of robots in the system.
The configuration space of a robot system is a parametric representation of all the possible configurations of the system, which are determined by specifying a real value for each independent parameter (degree of freedom) of the system. 

The system we study consists of unit-disc robots moving in the plane; see below for a more formal problem statement. 
Not only is this a reasonably faithful representation of existing robot systems (e.g., in logistics), but it already encapsulates the essential hardships of MRMP, as MRMP for planar systems with simply-shaped robots are known to be hard~\cite{hopcroft1984complexity,solovey2016hardness}.
Surprisingly, when we assume some minimum spacing between the start and target positions, the problem for robots moving in a simple polygon always has a solution, and the solution can be found in polynomial time, as shown by Adler et al.~\cite{adler2015efficient}.
The \emph{separation}, the minimum distance between the start and target positions, thus plays a key role in the difficulty of the problem.
However the separation bounds assumed by Adler et al. are not proven to be tight, so the question remains for what separation bounds the problem is always solvable.
In this paper we determine the minimal separation needed to ensure that the motion-planning problem has a solution, improving on the bounds obtained by Adler et al.
We also describe an algorithm that plans such motions efficiently, relying on the new bounds that we obtain.

\subparagraph{Related work.}
The multi-robot motion planning problem has received much attention over the years.
Already in 1983, the problem was described in a paper on the \emph{Piano Mover's problem} by Schwartz and Sharir~\cite{schwartz1983piano}.
Later that year, an algorithm for the case of two or three disc robots moving in a polygonal environment with $n$ polygon vertices was described, running in $O(n^3)$ and $O(n^{13})$ time respectively~\cite{schwartz1983piano2}.
This was later improved by Yap~\cite{yap1984coordinating} to $O(n^2)$ and $O(n^3)$ for two and three robots respectively, using the \emph{retraction method}.
A general approach using \emph{cell decomposition} was later developed in 1991 by Sharir and Sifrony~\cite{sharir1991coordinated} that could deal with a variety of robot pairs in $O(n^2)$ time.

Unfortunately, when the number of robots increases beyond a fixed constant, the problem becomes hard.
In 1984, a \emph{labeled} case of the multi-robot motion planning with disc robots and a simple polygonal workspace was shown to be strongly NP-hard~\cite{spirakis1984strong}. 
This is a somewhat weaker result than the PSPACE-hardness for many other motion planning problems. 
For rectangular robots in a rectangular workspace, however, the problem was shown to be PSPACE-hard~\cite{hopcroft1984complexity}.
This result has later been refined to show that for PSPACE-completeness it is sufficient to have only $1\times 2$ or $2\times 1$ robots in a rectangular workspace~\cite{hearn2005pspace}.

The hardness results for the general problem, as well as the often complex algorithms that solve the problem exactly~\cite{halperin2018algorithmic}, led to the development of more practical solutions, which often trade completeness of the solution for simplicity and speed, and can successfully cope with motion-planning problems with many degrees of freedom. Most notable among the practical solutions are
\emph{sampling-based} (SB) techniques.  These include the celebrated Probabilistic Roadmaps (PRM)~\cite{kavraki1996probabilistic}, the Rapidly Exploring Random Trees (RRT)~\cite{kuffner2000rrtcpnnect}, and their numerous variants~\cite{choset2005principles,halperin2018robotics,lavalle2006planning}.
The probabilistic roadmaps can be widely applied to explore the high-dimensional configuration space, such as settings with a large number of robots or robots with many degrees of freedom.
However, in experiments by Sanchez and Latombe~\cite{sanchez2002using} already for 6 robots with a total of 36 degrees of freedom the algorithm requires prohibitively long time to find a solution. 
Svestka and Overmars~\cite{svestka1998coordinated} suggested an SB algorithm specially tailored to many robots. Their solution still requires exorbitantly large roadmaps and is restricted to a small number of robots.  Solovey et al.~\cite{solovey2016finding} devised a more economical approach, dRRT (for discrete RRT), which is capable of coping with a larger number of robots, and which was extended to produce asymptotically optimal solutions~\cite{shome2020drrtstar}, namely converging to optimal (e.g., shortest overall distance) solution as the number of samples tends to infinity.



Regarding separability bounds, Solomon and Halperin~\cite{solomon2018motion} studied the labeled version of the unit-disc problem among polygonal obstacles in the plane, and showed that a solution always exists under a more relaxed monochromatic separation: each start or target position has an \emph{aura}, namely it resides inside a not-necessarily-concentric disc of radius 2, and the auras 
of two start positions (each being start or target) may overlap, as long as the aura of one robot does not intersect the other robot. They do not however make the distinction between monochromatic and bichromatic separation, and impose the same conditions for bichromatic auras as well.

With respect to unlabeled motion planning, the problem was first considered by Kloder and Hutchinson~\cite{kloder2006path} in 2006.
In their paper they provide a sampling-based algorithm which is able to solve the problem.
In 2016, Solovey and Halperin~\cite{solovey2016hardness} have shown that for unit square robots the problem is PSPACE-hard using a reduction from \emph{non-deterministic constraint logic} (NCL)~\cite{hearn2005pspace}.
This PSPACE-hardness result also extends to the labeled variant for unit square robots.
Just recently, the unlabeled variant for two classes of disc robots with different radii was also shown to be PSPACE-hard~\cite{brocken2020multi}, with a similar reduction from NCL.
In the reduction the authors use robots of radius $\frac{1}{2}$ and $1$.
In contrast, the earlier NP-hardness result for disc robots by Spirakis and Yap~\cite{spirakis1984strong} required discs of many sizes with large differences in radii.

Fortunately, an efficient (polynomial-time) algorithm can still exist when some additional assumptions are made on the problem.
Turpin, Michael, and Kumar~\cite{turpin2013concurrent} consider a variant of the unlabeled motion-planning problem where the collection of free positions surrounding every start or target position is star-shaped.
This allows them to create an efficient algorithm for which the path-length is minimized.
In the paper by Adler et al.~\cite{adler2015efficient}, an $O(n \log n + mn + m^2)$ algorithm is given for the unlabeled variant, assuming the workspace is a simple polygon and the start and target positions are \emph{well-separated}, which is defined as minimum distance of four between any start or target position. 
Their algorithm is based on creating a motion graph on the start and target positions and then treating this as an \emph{unlabeled pebble game}, which can be solved in $O(S^2)$ where $S$ is the number of pebbles~\cite{kornhauser1984coordinating}.
Furthermore, in the paper by Adler et al.~\cite{adler2015efficient} the separation bound $4\sqrt{2} - 2$ ($\approx 3.646$) is shown to be sometimes necessary for the problem to always have a solution.
When the workspace contains obstacles, Solovey et al.~\cite{solovey2015motion} describe an approximation algorithm which is guaranteed to find a solution when one exists, assuming also that the start and target positions are \emph{well-separated} and a minimum distance of $\sqrt{5}$ between a start or target position and an obstacle.

Finally, we mention that multi-pebble motion on graphs, already brought up above, is part of a large body of work on motion planning in discrete domains, sometimes called multi-agent path finding (MAPF), and often adapted to solving continuous problems; see~\cite{stern2019multiagent} for a review, and
\cite{demaine2019coordinated,wagner2015subdimensional,yu2018constant,yu2016optimal}
for a sample of recent results.

\subparagraph{Contributions.}
We  distinguish between two types of separability bounds: \emph{monochromatic}, denoted by~$\mu$, the separation between two start positions or between two target positions, and \emph{bichromatic}, denoted by~$\beta$, between a start and a target positions (see Figure~\ref{fig:definitions_example}).

After introducing necessary definitions and notation in Section~\ref{sec:definitions}, we begin with a lower bound construction for the monochromatic and bichromatic separation in Section~\ref{sec:separation-bounds}.
We prove that for $\mu= 4-\varepsilon$ or for $\beta= 3-\varepsilon$ (for arbitrarily small positive $\varepsilon$) the solution to the unlabeled multi-robot motion-planning problem in a simple polygon may not always exist.

We devote the remainder of the paper to showing a matching upper bound.
We prove that the unlabeled MRMP problem for unit-disc robots in a \emph{simple} polygon is always solvable for monochromatic separation $\mu = 4$ and bichromatic separation $\beta = 3$, 
assuming that the number of start and target positions match in each free space component.
For the case of a single free space component, we show an even stronger result that the problem is always solvable for $\mu = 4$ and $\beta = 0$.

Specifically, in Section~\ref{sec:single-component} we devise an efficient algorithm for MRMP for $\mu= 4$ and $\beta= 2$ in the case of a single free space component, and then extend it to also work for $\mu= 4$ and $\beta= 0$.
In Section~\ref{sec:multiple-components} we extend the algorithm to the case of a free space with multiple components and $\mu= 4$ and $\beta= 3$.
Our algorithm runs in $O(n \log n + mn + m^2)$ time, where $n$ is the size of the polygon, and $m$ is the number of robots.

Our results improve upon the results by Adler et al.~\cite{adler2015efficient}, who describe an algorithm with the same running time that always solves the problem assuming separation of $\mu = \beta = 4$. 
Similarly to their approach, we restrict the robots to move one at a time on a \emph{motion graph} that has the start and target positions as vertices.
Separation of $\mu = \beta = 4$ ensures that the connectivity of the motion graph never changes.
However, in our case, the lower bichromatic separation results in a dynamic motion graph: existence of some edges may depend on whether specific nodes are occupied by the robots.
Furthermore, the lower bichromatic separation in the case of multiple free space components leads to more intricate dependencies between the components.
Nonetheless, we show that there is always an order in which we can process the components, and devise a schedule for the robots to reach their targets.

\section{Definitions and notation}
\label{sec:definitions}
We consider the problem of $m$ indistinguishable unit-disc robots moving in a simple polygonal \emph{workspace} $\W \subset \R^2$ with $n$ edges. 
The \emph{obstacle space} $\mathcal{O}$ is 
the complement of the workspace, that is, $\mathcal{O} = \R^2 \setminus \W$. 
We refer to points $x \in
\W$ as \emph{positions}, 
and we say that a robot is at position $x$ when its center is positioned at point $x \in \W$. 

For given $x \in \R^2$ and $r \in \R_+$, we define $\D_r(x)$ to be the open disc of radius $r$ centered at $x$. 
The unit-disc robots are defined to be open sets.
Thus, a robot collides with the obstacle space $\mathcal{O}$ if and only if its center is at a distance strictly less than~$1$ from $\mathcal{O}$.
We can now define the \emph{free space} $\F$ to be all positions where a unit-disc robot does not collide with obstacle space, or, more formally, $\F = \{x \in \R^2 \mid \D_1(x) \cap \mathcal{O} = \emptyset \}$.
The free space is therefore a closed set. 
We refer to the connected components of $\F$ as \emph{free space components}. 

As the robots are defined to be open sets, two robots collide if the distance between their positions is strictly less than~$2$.
In other words, if a robot occupies a position $x$ then no other robot can be at a position $y \in \D_2(x)$; we call  $\D_2(x)$ the \emph{aura} of the robot at position $x$.
In our figures the auras are indicated by dashed circles (see Figure~\ref{fig:definitions_example}).
\begin{figure}[t]
    \centering
    \includegraphics{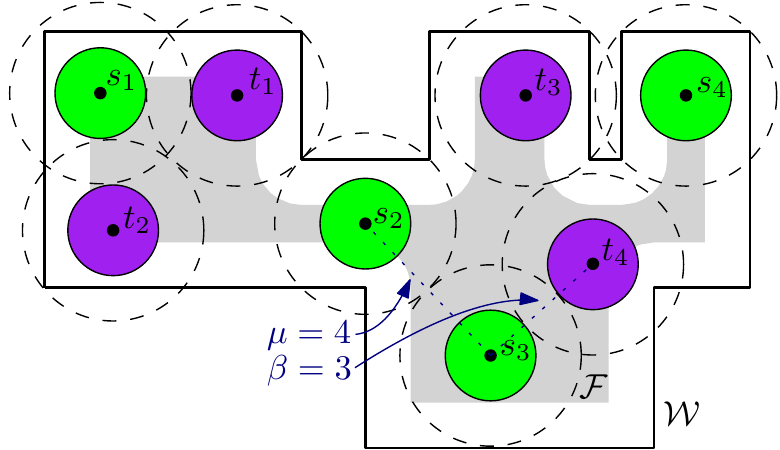}
    \caption{Basic definitions. 
    The workspace $\W$ is the rectilinear polygon, the free space $\F$ is the inner gray area. The aura of a start or target position is shown as a dashed circle of radius two (for unit-disc robots). The monochromatic separation $\mu = 4$, the bichromatic separation $\beta = 3$.}
    \label{fig:definitions_example}
\end{figure}
%

\subparagraph{Unlabeled multi-robot motion-planning problem.}
Given a set~$S$ of $m$ start positions and a set~$T$ of $m$ target positions, where $S, T \subset \F$, the goal is to plan a \emph{collision-free} motion for $m$ robots from $S$ to $T$, such that by the end of the motion every target position in $T$ is occupied by some robot.
Since the robots are indistinguishable (i.e., unlabeled), it does not matter which robot ends up at which target position.
More formally, we wish to find continuous paths $\pi_i \colon [0, 1] \rightarrow \F$, for $1 \leq i \leq m$, such that $\pi_i(0) = s_i$ and $\{\pi_i(1)\mid 1\leq i\leq m\} = T$.
Furthermore, we require that, at any moment in time $\tau\in[0, 1]$, for all robots $i$, no other robot $j$ is in the aura of robot $i$, $\pi_j(\tau)\not\in\D_2(\pi_i(\tau))$.
In our figures we indicate start positions by green unit discs centered at points in $S$, and target positions by purple unit discs centered at points in $T$.


\medskip
For a subset $Q \subset \F$ of the free space, we use 
$S(Q) = S \cap Q$ to denote the set of start positions that reside in~$Q$, and similarly 
$T(Q) = T \cap Q$ to denote the set of target positions in~$Q$.
We define the \emph{\weight{}} $q(Q)$ as the difference between the number of start and target positions in $Q$, $q(Q) = |S(Q)| - |T(Q)|$.
For each free space component $F_{i}$, we require that $q(F_{i}) = 0$, i.e., there needs to be an equal number of start and target positions.

\medskip
Finally, we state below a few useful properties proven by Adler et al.~\cite{adler2015efficient}.
\begin{lemma}[\cite{adler2015efficient}]\label{lem:adler1}
Each component $F_i$ of the free space is simply connected.
\end{lemma}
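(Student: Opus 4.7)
My plan is to argue by contradiction: suppose some component $F_i$ of $\F$ is not simply connected and derive a contradiction from the simple-connectivity of $\W$ together with the definition $\F = \{x : \D_1(x)\cap \mathcal{O}=\emptyset\}$. Non-simple-connectivity of the closed planar set $F_i$ is equivalent to the existence of a bounded connected component $H$ of $\R^2\setminus F_i$ (a ``hole'' in $F_i$); by standard point-set topology we have $\partial H\subset F_i$, and $H$ is open.

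First I would locate $H$ inside $\W$. Because $\W$ is a simple polygon, its complement $\mathcal{O}=\R^2\setminus\W$ is a single unbounded connected region; since $\mathcal{O}\cap F_i=\emptyset$, the whole set $\mathcal{O}$ lies in the unique \emph{unbounded} component of $\R^2\setminus F_i$. Hence the bounded component $H$ satisfies $H\cap\mathcal{O}=\emptyset$, i.e., $H\subset \W$.

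Next I would rule out the possibility that $H$ is entirely ``safe'', meaning every point of $H$ lies in $\F$. If $H\subset\F$, then since $H$ is connected and disjoint from $F_i$, it is contained in some other component $F_j$ of $\F$. Because $F_j$ is closed, $\bar H\subset F_j$, so $\partial H\subset F_j$; but we also have $\partial H\subset F_i$ and $F_i\cap F_j=\emptyset$, a contradiction. Therefore some point $p\in H$ has $\D_1(p)\cap\mathcal{O}\neq\emptyset$, so there exists $o\in\mathcal{O}$ with $\|p-o\|<1$.

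Finally I would exploit the topological separation to reach a contradiction. Because $p\in H$ lies in a bounded component of $\R^2\setminus F_i$ while $o\in\mathcal{O}$ lies in the unbounded component, the straight segment $\overline{po}$ must cross $F_i$ at some point $x$. Then $\|x-o\|\le\|p-o\|<1$, so $o\in\D_1(x)$; but $x\in F_i\subset\F$ forces $\D_1(x)\subset \W$, contradicting $o\in\mathcal{O}$. The key step to get right is the second one, verifying that a hole cannot be ``purely internal'' to $\F$: the delicate part is using the closedness of the other component $F_j$ to push $\partial H$ into $F_j$ and then clashing with $\partial H\subset F_i$. Everything after that is a clean crossing-argument using only the defining radius-$1$ property of $\F$.
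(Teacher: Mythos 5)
The paper does not give its own proof here: Lemma~\ref{lem:adler1} is cited from Adler et~al.~\cite{adler2015efficient} without an argument. Your self-contained proof is correct, and it is the natural topological argument that one would expect the cited source to use as well. The chain of reasoning is sound: a bounded complementary component $H$ of $\R^2\setminus F_i$ has $\partial H\subset \partial F_i\subset F_i$ since $F_i$ is closed; $\mathcal{O}$ is connected and unbounded (because $\W$ is a simple polygon), so $\mathcal{O}$ lies in the unbounded component, forcing $H\subset\W$; $H$ cannot lie wholly in $\F$, for otherwise $\partial H$ would sit in two disjoint components $F_i$ and $F_j$; and once you have $p\in H$ at distance less than $1$ from some $o\in\mathcal{O}$, the segment $\overline{po}$ must cross $F_i$ at a point $x$ with $\|x-o\|<1$, contradicting $\D_1(x)\cap\mathcal{O}=\emptyset$. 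The only mild caveat worth flagging is your opening equivalence between non-simple-connectivity and the existence of a bounded complementary component: this is the standard ``no holes'' characterization and it does apply to $F_i$ (a closed, connected, locally connected region bounded by finitely many line segments and circular arcs), but it would be worth a one-line remark that $F_i$ is nice enough for this equivalence, rather than invoking it for arbitrary closed planar sets.
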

\begin{lemma}[\cite{adler2015efficient}]\label{lem:adler2}
For any $x\in\F$, let $F_i$ be the connected component of the free space containing $x$. Then the set $\D_2(x)\cap F_i$ is connected.
\end{lemma}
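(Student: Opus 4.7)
The plan is to construct, for any given $y \in \D_2(x) \cap F_i$, a path from $x$ to $y$ lying in $\D_2(x) \cap F_i$. The natural candidate is the unique geodesic $\gamma$ from $x$ to $y$ inside $F_i$; it is well-defined because $F_i$ is simply connected by Lemma~\ref{lem:adler1}. Since $F_i$ is the Minkowski erosion of $\W$ by the closed unit disc, the boundary $\partial F_i$ consists of straight segments parallel to edges of $\W$ (at distance $1$) and circular arcs of radius $1$ centered at the reflex vertices of $\W$, so $\gamma$ is a finite concatenation of tangent line segments and such boundary arcs. It then suffices to show that $\gamma \subseteq \D_2(x)$.

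I would prove this by establishing that $|x\gamma(t)|$ is non-decreasing as $t$ moves from $x$ toward $y$ along $\gamma$, which immediately yields $|x\gamma(t)| \le |xy| < 2$ at every point. On any straight-segment piece of $\gamma$, $|x\gamma(t)|^2$ is a convex quadratic in the arc-length parameter, so its maximum on the piece is attained at one of its endpoints. On an arc piece of radius $1$ centered at a reflex vertex $v$, parameterizing the arc by the angle $\alpha$ at $v$ measured from the direction $v \to x$, one has $|x\gamma(\alpha)|^2 = |xv|^2 + 1 - 2|xv|\cos\alpha$, which is strictly monotone in $\alpha$ on $[0,\pi]$. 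The constraint $|xy| < 2$, combined with $x, y \in F_i$ (hence $|xv|, |yv| \ge 1$), rules out the case $\alpha = \pi$: an arc passing through $\alpha = \pi$ would force $y$ to be diametrically opposite $x$ with respect to $v$, giving $|xy| \ge 2|xv|\ge 2$. Thus $\alpha$ stays in $[0,\pi)$, and the tangency of $\gamma$ to the arc at its two endpoints forces $\alpha$ to vary monotonically along the arc, so $|x\gamma|$ is monotone there as well.

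The main obstacle is to rigorously show that no interior junction between pieces of $\gamma$ can be a local maximum of $|x\gamma|$, so that monotonicity holds across the entire path. I would use the minimality of $\gamma$ via a local variational argument: at any junction $z^*$ where $|xz^*|$ were a local maximum, both the incoming and the outgoing tangent directions of $\gamma$ at $z^*$ would make non-acute angles with $z^* - x$, and the tangency constraints at the arc would allow a short local perturbation of $\gamma$ pulling $z^*$ slightly toward $x$ that still lies in $F_i$ and is strictly shorter, contradicting the geodesic property. Once this is ruled out, $|x\gamma|$ attains its maximum at $y$, so $\gamma$ is contained in the closed disc of radius $|xy|$ around $x$, hence in $\D_2(x)$, completing the proof that $\D_2(x) \cap F_i$ is path-connected.
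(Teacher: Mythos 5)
Your approach --- routing along the geodesic $\gamma$ from $x$ to $y$ in $F_i$ and showing $\gamma\subset\D_2(x)$ by establishing monotonicity of $|x\gamma(t)|$ --- is genuinely different from the argument of Adler et al.\ cited here. Their proof (mirrored in Lemma~\ref{lem:path_b0} of this paper) works directly with the segment $\overline{xy}$: when $|xy|<2$, the piece of the dilated obstacle that blocks $\overline{xy}$ is shown to be confined to a small region (contained in a unit-radius disc through the exit/entry points of the segment from $F_i$), so a detour along that obstacle's boundary stays inside $\D_2(x)$. That argument uses the unit-erosion structure head-on; yours tries to extract the same conclusion from local properties of the geodesic, which makes the global hypothesis $|xy|<2$ harder to bring to bear.

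The gap is real, and it is exactly where you flag it, but it runs deeper than a missing variational lemma. (1)~On an arc centered at $v$, $|x\gamma(\alpha)|^2=|xv|^2+1-2|xv|\cos\alpha$ is monotone in $\alpha$ only on $[0,\pi]$ (or $[-\pi,0]$); nothing you say rules out the geodesic's traversal of the arc straddling $\alpha=0$, which can happen when $\gamma$ enters the arc along a common tangent from a previous obstacle rather than directly from $x$, and then $|x\gamma|$ is not monotone on that piece. (2)~Your reason for excluding $\alpha=\pi$ is not sound: a geodesic arc passing through $\alpha=\pi$ does not force $y$ to sit diametrically opposite $x$; what is true is that the point at $\alpha=\pi$ lies at distance $|xv|+1\ge2$ from $x$, i.e.\ outside $\D_2(x)$ --- but that is precisely what you are trying to rule out, so invoking it here is circular. (3)~The perturbation argument at a junction is not automatic: the junction lies on the obstacle boundary, and the direction toward $x$ can point into the obstacle wedge of the turn, so showing the perturbed path remains in $F_i$ again needs a geometric fact about the erosion (that the blocking obstacle is ``shallow'' when $|xy|<2$), not just $|xv|\ge1$ and $|xy|<2$. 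As written, the monotonicity claim is an unproven assertion that is essentially as strong as the lemma itself; to make the geodesic route work you would need to inject the erosion geometry at one of these three points, at which time it is likely simpler to argue as Adler et al.\ do on $\overline{xy}$ directly.
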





\section{Tighter separation bounds}
\label{sec:separation-bounds}

In this section we explore the separation between the start and target positions that is necessary for the problem to always have a solution.
We show that, without a certain amount of monochromatic separation ($\mu$) and bichromatic separation ($\beta$), there are instances of the problem that cannot be solved, thus certain separation is necessary for the problem to always have a solution.
%
We first prove that a separation of $\mu = 4$ is necessary. This bound is tight and it improves a previous lower bound of $\mu = 4\sqrt{2} - 2$ ($\approx 3.646$)~\cite{adler2015efficient}.
We then show that $\beta = 3$ is also necessary.


\begin{figure}[t]
\centering
    \centering
    \includegraphics{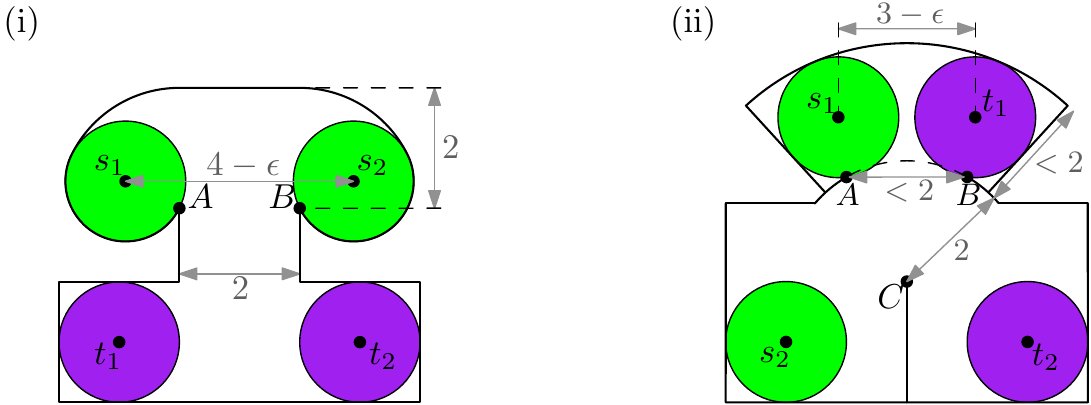}
    \caption{(i) An instance for $\mu = 4 - \epsilon$ with one free space component. The robots are blocking each other from entering the corridor.
	(ii) An instance for $\beta < 3- \epsilon$. The distance $|AB|<2$ is too small for a robot to pass through, thus there are two free space components. The robot in the top component is blocking the one in the bottom component.}
    \label{fig:proof_mu4_and_b3}
\end{figure}

\begin{restatable}{lemma}{monochromaticLowerBound}
\label{lem:m4}
For $\mu < 4$ a solution does not always exist, even if the free space consists of a single connected component containing two start and two target positions.
\end{restatable}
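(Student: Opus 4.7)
The plan is to construct a simple polygon $\W$ together with positions $s_1, s_2, t_1, t_2$ in a single free-space component such that $|s_1-s_2|=|t_1-t_2|=4-\varepsilon$ and no valid motion plan exists. The core idea is a topological bottleneck argument.

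I would engineer $\W$ so that the free space $\F$ has a distinguished \emph{cut point} $p$: the set $\F\setminus\{p\}$ has two connected components $\F_1$ and $\F_2$, with $s_1,s_2\in\F_1$ and $t_1,t_2\in\F_2$, and crucially $\F_1\subset\D_2^{\circ}(p)$, the open disc of radius~$2$ around~$p$. With such a workspace in place, impossibility follows from a short continuity argument: in any alleged valid plan, both robots must eventually reach $\F_2$, so by continuity each path must pass through $p$. Let $t^{*}$ be the earliest instant at which some robot (say robot~$1$) is located at~$p$. By minimality of~$t^*$, robot~$2$ has not yet crossed, so $\pi_2(t^*)\in\F_1\subset\D_2^{\circ}(p)$, giving $|\pi_1(t^*)-\pi_2(t^*)| = |p-\pi_2(t^*)| < 2$ and contradicting the pairwise distance-$2$ constraint.

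The condition $\F_1\subset\D_2^{\circ}(p)$ is compatible with $|s_1-s_2|=4-\varepsilon$ because $\D_2^{\circ}(p)$ has diameter exactly~$4$; placing $s_1,s_2$ close to antipodal points of $\D_2(p)$ inside $\F_1$ yields any distance strictly less than~$4$, and a symmetric placement of $t_1,t_2$ in $\F_2$ meets the target separation bound.

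The main obstacle is the geometric realisation. One must produce a simple polygon whose free space actually has a single-point pinch at~$p$, with a lobe $\F_1$ simultaneously large enough (diameter close to~$4$) to host the near-antipodal start positions and confined to $\D_2^{\circ}(p)$. Concretely, I would take $\W$ to consist of a large round room whose free space is essentially $\D_2(p)$, joined to the region containing $t_1,t_2$ through a degenerate tangent contact engineered so that, for the unit-disc robot, passage between the two sides is possible only at the single point~$p$. Verifying that the workspace can be arranged to produce exactly this pinched topology---rather than a small two-dimensional junction around $p$ through which robots could slip past each other on curved paths---is the delicate part of the construction.
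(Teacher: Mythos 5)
Your topological argument --- that the first robot to occupy a cut point $p$ would necessarily collide with the other robot still sitting in $\F_1\subset\D_2^\circ(p)$ --- is correct and would indeed be cleaner than the paper's argument, \emph{if} the required workspace existed. But the geometric realisation, which you yourself flag as ``the delicate part,'' is not merely delicate: as sketched it cannot work. Your proposed realisation is ``a large round room whose free space is essentially $\D_2(p)$.'' If $\F_1$ were (approximately) a full disc around $p$, then $p$ would be an \emph{interior} point of $\F_1$; but then $\F_1\setminus\{p\}$ stays connected and any second component $\F_2$ with $p$ in its closure would have to approach $p$ through a $2$-dimensional neighbourhood already occupied by $\F_1$ --- so $p$ cannot be a cut point. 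Hence $p$ must be a boundary point of $\F_1$, which forces $\F_1$ to lie (roughly) on one side of a line through $p$; combined with $\F_1\subset\D_2^\circ(p)$ and $\operatorname{diam}(\F_1)\ge 4-\varepsilon$, the starts must be nearly antipodal across $p$, so $\F_1$ must snake around $p$ without enclosing it (it has to stay simply connected, cf.\ Lemma~\ref{lem:adler1}) while still touching $p$ and pinching to a single point there. You give no construction meeting all of these constraints simultaneously, and it is far from clear that one exists as the erosion of a simple polygon.

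The paper's proof takes a genuinely different and geometrically more forgiving route (Figure~\ref{fig:proof_mu4_and_b3}(i)): the starts sit above a width-$2$ corridor whose mouth is bounded by points $A,B$ on $\partial\D_1(s_1)$ and $\partial\D_1(s_2)$. Instead of demanding the strong containment $\F_1\subset\D_2^\circ(p)$, the paper only needs that every candidate parking spot for the second robot lies within distance $3$ of the corridor corner $A$ --- close enough that its aura intersects the annular region swept by the first robot as it pivots around $A$ into the corridor. That weaker, local condition is visibly met by the depicted domain. If you want to salvage your approach you would need to exhibit an explicit region (circular arcs are fine, by the paper's Dudley-approximation remark) with a genuine one-point pinch $p$ and verify $\F_1\subset\D_2^\circ(p)$ together with $\operatorname{diam}(\F_1)=4-\varepsilon$; otherwise, the argument is incomplete where it matters most.
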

\begin{proof}
See Figure~\ref{fig:proof_mu4_and_b3}~(i) for an instance where a solution does not exist when $\mu = 4 - \epsilon$ for some arbitrarily small $\epsilon > 0$. The example uses circular arcs as boundaries, but these can be approximated~\cite{dudley-approx} to obtain a slightly larger simple polygon with positions that are at most $\epsilon/4$ away from a position in the original region.

In the example, two robots $r_1, r_2$ at start positions $s_1, s_2$ need to move through a narrow corridor of width 2 to reach the target positions $t_1, t_2$.
The separation between $s_1$ and $s_2$ is $4 - \epsilon$.
Let points $A, B$ be the endpoints of the corridor closest to $s_1$ and $s_2$, which in the example lie on the boundary of the robots at $s_1$ and $s_2$ respectively.
Clearly, both robots cannot move into the corridor simultaneously, therefore assume, without loss of generality, that $r_1$ moves across the line segment $\overline{AB}$ first.
Thus, for such a solution, robot $r_1$ will need to rotate around point $A$ and then move down the corridor. 

We observe that points $A$ and $B$, the end points of the corridor, must be below the line segment $\overline{s_1s_2}$, given that the corridor has width 2 and the separation between $s_1$ and $s_2$ is less than 4.
Notice that by the triangle inequality we must have that the distance between $A$ and $s_2$ is less than 3.
This means that the aura of $r_2$ at $s_2$ intersects the area swept by $r_1$ as it moves around $A$ and into the corridor.
Furthermore, there is no point in the free space where $r_2$ can move to give space to $r_1$, since any point obstructs the rotation of $r_1$ around $A$.
Therefore, no solution exists for this instance. 
\end{proof}

Thus, for a solution to always exist, a monochromatic separation of $\mu = 4$ is necessary.
Since the problem for $\mu = \beta = 4$ always has a solution, the monochromatic separation is tight.
Hence, we aim to reduce the bichromatic separation $\beta$.
%
%
\begin{restatable}{lemma}{bichromaticLowerBound}
\label{lem:b3}
For $\beta < 3$ a solution does not always exist, even if there are only two free space components, each containing one start and one target position.
\end{restatable}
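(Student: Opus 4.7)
The plan is to exhibit the specific bad instance sketched in Figure~\ref{fig:proof_mu4_and_b3}(ii), and prove that it has no motion plan. The workspace consists of two chambers joined by a narrow pinch whose clearance $|AB|$ is strictly less than~$2$. Since no unit-disc robot can pass through such a pinch, the free space has two connected components, one per chamber, and each robot is permanently confined to its own chamber. I place exactly one start and one target in each chamber, so the charges match inside each component, and tune the positions so that all monochromatic distances remain large while the minimum bichromatic distance, realized across the pinch, is exactly $3-\epsilon$.

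The crucial geometric phenomenon I exploit is that although the two free-space components are disjoint, the auras of two robots in opposite chambers, being open discs of radius~$2$, can overlap through the pinch because $|AB|<2$. So the two robots must still avoid each other's auras even though their centers lie in different chambers. The construction is arranged so that (i) every continuous path of the bottom robot $r_2$ from $s_2$ to $t_2$ must cross a bottleneck point $q$ in the bottom chamber near the pinch, and (ii) the entire free-space component containing the top robot $r_1$ lies inside the open disc $\D_2(q)$. Hence at the moment $r_2$ reaches $q$ the robot $r_1$, wherever it may be in the top component, lies at distance strictly less than~$2$ from $r_2$, contradicting the collision-free condition. This rules out every candidate motion plan $(\pi_1,\pi_2)$, regardless of scheduling.

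The main obstacle is realizing the geometry at the tight value $\beta = 3-\epsilon$ rather than at some smaller constant. Making the top component fit inside $\D_2(q)$ while still accommodating a start-target pair at distance $3-\epsilon$ forces the top component to have diameter close to~$3$, so the chamber shapes and the pinch width $|AB|$ must be coordinated so that the $1-\epsilon$ slack between the aura radius~$2$ and the bichromatic distance $3-\epsilon$ is exactly absorbed by the construction, and not so much slack that one could shift $r_1$ out of $\D_2(q)$ or route $r_2$ around $q$. Once the geometry is in place, the deadlock follows directly from the aura argument above. As in Lemma~\ref{lem:m4}, any circular-arc boundaries that appear in the construction can be replaced by polygonal approximations via~\cite{dudley-approx}, yielding a simple polygon without changing the conclusion.
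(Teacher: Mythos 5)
Your proposal reproduces the paper's construction (Figure~\ref{fig:proof_mu4_and_b3}(ii)) and the paper's deadlock mechanism: a pinch of width $|AB|<2$ separates two chambers, yet the aura of the top robot always reaches through the pinch and covers the region the bottom robot must traverse, so no scheduling can avoid the collision. Your reformulation in terms of a single bottleneck point $q$ with the top free-space component contained in $\D_2(q)$ is a slightly crisper packaging of what the paper expresses via the semicircular top arc centered at $C$ of radius below~$4$, but it is essentially the same argument and the same construction, so the two proofs coincide in approach.
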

\begin{proof}
See Figure~\ref{fig:proof_mu4_and_b3}~(ii) for an instance where a solution does not exist when $\beta = 3 - \epsilon$ for some arbitrarily small $\epsilon > 0$. As above, we can approximate the circular arcs by polygonal chains.

In the example, there are two connected components of the free space, both containing a start and target position ($s_1, t_1$ and $s_2, t_2$ respectively)
since points $A$ and $B$ are at distance less than $2$.
Let $|AB|=2 - \delta$ for some $\delta > 0$.
Here, we define $\delta$ such that $\delta < 2\epsilon/3$.
In this example, point $A$ lies on the boundary of $\D_1(s_1)$ and points $B$ on the boundary of $\D_1(t_1)$.
From these facts, it follows that $A$ lies to the left of the line segment $\overline{s_1 C}$ and $B$ lies to the right of the line segment $\overline{t_1 C}$.
By the triangle inequality we know that the distance from $s_1$ to $C$ must be less than 3, similarly for the distance from $t_1$ to $C$.

The key characteristic is that no matter where the robot in the top component is, it will block the movement from start to target of the robot in the bottom component.
Since the top arc of the workspace is a semi-circle with center at $C$ and radius less than 4, there is no point in the top component of the free space which does not block the movement in the bottom.
Thus, the robot at $s_2$ can never reach $t_2$, which means no solution exists for this example.
\end{proof}

The lower bound construction for $\beta < 3$ has two free space components with one robot in each.
A robot in the top free space component is blocking the motion of a robot in the bottom component, no matter which position it is in.
Thus, the lower bound is not applicable if the free space has only one component. Indeed, as we  show in the next section, in this case no bichromatic separation is necessary.


\section{A single free space component}
\label{sec:single-component}

In this section we consider the multi-robot motion-planning problem for the case where the free space consists of a single component $F$. 
Initially, for simplicity, we assume $\mu = 4$ and $\beta = 2$.
That is, no start/target position can be inside the aura of another start/target position.
We later modify the algorithm to handle the case with no bichromatic separation.

The algorithm by Adler et al.~\cite{adler2015efficient} uses the separation assumption $\mu = \beta = 4$, and cannot be applied if $\beta < 4$.
Their algorithm greedily moves the robots to the target positions, and may not always be able to find a solution in our case.
Indeed, a pair of a start and a target positions whose auras intersect can possibly block the path for robots who need to go through the intersection of these auras (see Figure~\ref{fig:blocker_and_remote_components}~(i)).
Therefore, in our algorithm we need to handle such blocking positions.

\begin{figure}[b]
\centering
    \centering
    \includegraphics{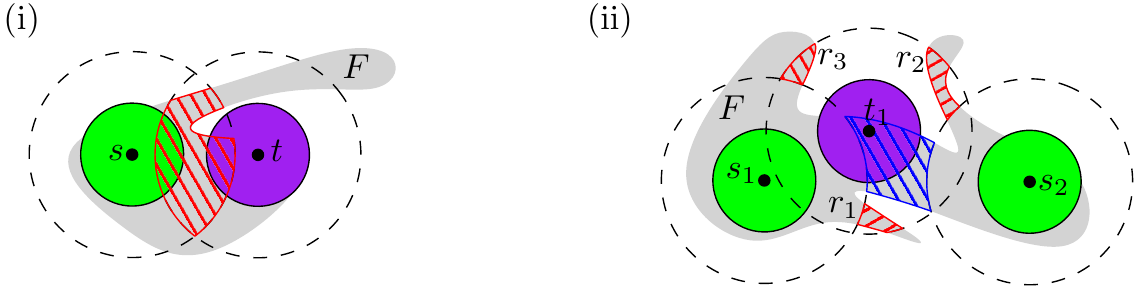}
    \caption{(i) When $\beta < 4$, a robot cannot cross the intersection of the auras of $s$ and $t$ (in red) if either $s$ or $t$ is occupied. (ii) $\Dminus_2(t_1)$ consists of multiple connected components (remote in red and non-remote in blue). 
    Remote components $r_1$ and $r_2$ are blocking areas associated with blocker $t_1$.}
    \label{fig:blocker_and_remote_components}
\end{figure}

\subsection{Preliminaries}
\label{sec:preliminaries}
\subparagraph{Remote components.}
Let $A(S)=\bigcup_{s \in S} \D_2(s)$ be the union of all auras of the start positions $S$. 
For a target position $t \in T$, let $\Dminus_2(t) = (\D_2(t) \cap F) \setminus A(S)$ be the portion of $F$ within the aura of $t$ minus the auras of the start positions in $S$ (see Figure~\ref{fig:blocker_and_remote_components}(ii)).
Note that even though, by Lemma~\ref{lem:adler2}, $\D_2(x) \cap F$ is always connected for any $x\in F$, the region $\Dminus_2(t)$ may consist of multiple connected components (split by the auras of start positions).
One of these components contains $t$ (shown in blue in the figure).
A component of $\Dminus_2(t)$ that does not contain $t$ is called a \emph{remote component} of $t$ (shown in red in the figure).
Let $R$ be the set of remote components for all target positions in $T$.


\subparagraph{Blockers and blocking areas.}
Consider the example in Figure~\ref{fig:blocker_and_remote_components}\,(ii).
If $t_1$ is occupied, its remote components $r_1$, $r_2$, and $r_3$ cannot be crossed by a moving robot.
Crossing the remote component $r_3$ can be avoided by moving along its boundary.
However remote components $r_1$ and $r_2$ pose a problem, as they cut the free space, and thus crossing them cannot be avoided.
We call such remote components \emph{blocking areas}.

For a target position $t\in T$, a \emph{blocking area} is a remote component of $t$ that partitions $F$ into multiple components.
If $t$ is associated with at least one blocking area, we refer to $t$ as a \emph{blocker}.
A blocker might have multiple associated blocking areas (as in Figure~\ref{fig:blocker_and_remote_components}\,(ii)).
Let $B \subseteq R$ be the set of blocking areas for all target positions in $T$.

For a blocking area $b \in B$ associated with position $t$,
let the \emph{blocking path} be any path $\pi \subset \D_2(t)$
connecting $b$ to $t$.
By Lemma~\ref{lem:adler2}, $\pi$ exists, and by definition of the blocking area, $\pi$ crosses the aura of at least one start position. 
We further show in the following lemma that this path does not intersect any other blocking area.
\begin{restatable}{lemma}{DisjointBlockingPath}
\label{lem:disjoint-blocking-paths}
For a blocking area $b_x \in B$ and its associated blocker $x$, there exists some blocking path $\pi$ such that $\pi \subset \D_2(x)$ and $\pi$ does not intersect a blocking area $b_y$ of any other blocker $y$.
\end{restatable}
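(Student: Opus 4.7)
My plan is to leverage the monochromatic separation $\mu \geq 4$ together with Lemma~\ref{lem:adler2} to produce the desired blocking path. The argument will have two parts: existence of some path $\pi$ from $b_x$ to $x$ that lies in $\D_2(x) \cap F$, and the observation that \emph{every} such path is automatically disjoint from blocking areas of other blockers.

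For the existence part, I would invoke Lemma~\ref{lem:adler2} to conclude that $\D_2(x) \cap F$ is connected. Since $b_x$ is, by definition, a component of $\Dminus_2(x) = (\D_2(x) \cap F) \setminus A(S)$, it is contained in $\D_2(x) \cap F$, and so is $x$ itself. Hence there is a continuous path $\pi$ from some point of $b_x$ to $x$ lying entirely in $\D_2(x) \cap F$; in particular $\pi \subseteq \D_2(x)$, which is the first requirement of a blocking path.

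For the avoidance part, I would use that blockers are target positions, so any blocker $y \neq x$ satisfies $|xy| \geq \mu \geq 4$. Since $\D_2(x)$ and $\D_2(y)$ are open discs of radius $2$, they are disjoint (tangencies do not intersect because the discs are open). Any blocking area $b_y$ of $y$ satisfies $b_y \subseteq \Dminus_2(y) \subseteq \D_2(y)$, so $b_y$ is disjoint from $\D_2(x)$ and therefore disjoint from $\pi$. Intersecting this constraint over all other blockers costs nothing, so the arbitrary path produced in the first part already fulfills both conditions.

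The only subtlety I would flag, and hence the nearest thing to an obstacle, is the interpretation of $\D_r$: the argument relies on $\D_2(\cdot)$ being the \emph{open} aura, so that the boundary case $|xy| = 4$ already forces the two auras of distinct targets to be disjoint. Since this convention is fixed in Section~\ref{sec:definitions}, no further difficulty arises, and the lemma essentially reduces to a one-line application of Lemma~\ref{lem:adler2} combined with the monochromatic separation.
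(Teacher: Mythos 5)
Your proof is correct and follows essentially the same two-step argument as the paper's: Lemma~\ref{lem:adler2} gives connectedness of $\D_2(x)\cap F$, hence a path from $b_x$ to $x$ inside $\D_2(x)$, and the monochromatic separation $\mu=4$ makes $\D_2(x)$ and $\D_2(y)$ disjoint for any other blocker $y$, so the path avoids every $b_y\subseteq\D_2(y)$. The extra note that the auras are open (so tangency at $|xy|=4$ is not a problem) is a nice bit of care, though not a point the paper belabors.
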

\begin{proof}
First, we show that there exists some blocking path $\pi \subset \D_2(x)$. This follows from (1)~the assumption that the free space consists of one connected component, and (2)~Lemma~\ref{lem:adler2}.
Thus, there exists a path $\pi$ within $\D_2(x)$ connecting $x$ and $b_x$.


Now, we show that $\pi$ does not intersect any other blocking area $b_y$.
The blocking areas $b_x$ and $b_y$ lie inside $\D_2(x)$ and $\D_2(y)$, respectively.
Since $\mu = 4$, the two auras $\D_2(x)$ and $\D_2(y)$ do not intersect.
Thus, there exists a blocking path $\pi$ from $x$ to $b_x$ that stays within $\D_2(x)$ and therefore cannot cross another blocking area $b_y$.
\end{proof}

\subparagraph{Residual components.}
Let $\overline{F} = F \setminus \bigcup R$ 
be the portion of the free space $F$ that does not intersect any remote component in $R$.
By definition, a blocking area partitions $F$ into multiple connected components.
Since some remote components are blocking areas, the region $\overline{F}$ may consist of multiple connected components.
We refer to the connected components of $\overline{F}$ as \emph{residual components}. Furthermore, let $F^* = \overline{F} \setminus A(S)$ be the portion of the free space $F$ that does not intersect either the aura of a start position or a remote component of a target position.
\begin{restatable}{lemma}{FiComplexity}
\label{lem:free-space-complexity}
Given $m$ starting and target positions in a polygonal workspace of size $n$, the subsets $\overline{F}$ and $F^*$ of the free space, and the remote components $R$, all have complexity $O(m + n)$ 
and can be computed in $O((m+n)\log(m+n))$ time.
\end{restatable}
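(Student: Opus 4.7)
The plan is to bound the combinatorial complexity of the relevant structures first, then derive the time bound from standard arrangement machinery. I split the complexity argument into three ingredients and the computation into one final step.

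First, $F$ is the Minkowski erosion of the simple polygon $\W$ by a unit disc, so by Lemma~\ref{lem:adler1} its boundary $\partial F$ is a single simple closed curve of complexity $O(n)$, consisting of $O(n)$ straight segments parallel to edges of $\W$ and $O(n)$ circular arcs of radius $1$ at reflex vertices of $\W$. The $\mu = 4$ separation on $S$ makes the discs $\{\D_2(s)\}_{s \in S}$ pairwise interior-disjoint, so $A(S)$ has complexity $O(m)$; the same argument applies to $A(T) = \bigcup_{t \in T} \D_2(t)$.

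Second, two packing arguments control the interactions. (a) A start $s$ with $\D_2(s) \cap \D_2(t) \neq \emptyset$ satisfies $|s-t| < 4$, and since starts are pairwise at distance $\geq 4$, only $O(1)$ such $s$ exist around any given $t$; summing over $t \in T$ yields $O(m)$ start--target aura crossings in total. (b) For the interactions between $\partial F$ and target auras, I plan to charge each fragment of $\partial F$ lying in some $\D_2(t)$ either to the polygon feature of $\W$ that generated it (which then lies within distance $3$ of $t$) or to a vertex of $\partial F$ lying inside $\D_2(t)$. Because targets are pairwise at distance $\geq 4$, a simple area-packing estimate inside a disc of radius $5$ around any fixed vertex or polygon feature $v$ shows that $v$ is charged by only $O(1)$ targets; summing gives an $O(n)$ total. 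Together with the $O(m)$ arcs contributed by the target-aura boundaries themselves, the arrangement $\mathcal{A}$ of $\partial F$, $\partial A(S)$, and $\partial A(T)$ restricted to $F$ has complexity $O(n+m)$.

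Third, the remote components $R$ are the connected faces of $\mathcal{A}$ lying inside some $\D_2(t) \cap F$ but not containing $t$; $\overline{F} = F \setminus \bigcup R$ and $F^* = \overline{F} \setminus A(S)$ are also unions of faces of the same arrangement. All three sets therefore have complexity $O(n+m)$.

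For the time bound, $\mathcal{A}$ consists of $O(n+m)$ curves of constant algebraic degree, so a standard sweep-line (or incremental) construction builds it in $O((n+m)\log(n+m))$ time plus output size $O(n+m)$. A linear-time traversal of the face-adjacency graph then identifies the face of $\D_2(t) \cap F$ containing each target $t$, marks the other faces as remote, and assembles $\overline{F}$ and $F^*$. The main technical obstacle is the charging step~(b): a single polygon edge may contribute many fragments to $\partial F$ when it is broken by intrusions from other polygon features, so the charge has to be split carefully between the ``parallel'' segment of the edge and the intrusions, which are accounted for by vertices of $\partial F$. Lemma~\ref{lem:adler2}, which guarantees that $\D_2(t) \cap F$ is connected, and Lemma~\ref{lem:adler1}, which rules out holes in $F$, are both essential to keep this charging local and prevent double-counting across nearby targets.
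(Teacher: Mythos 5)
Your proposal is correct and arrives at the same bound, but it takes a somewhat more elementary and arrangement-centric route than the paper. The paper's proof builds the result up in stages: it first invokes the classical union-of-pseudo-discs result of Kedem, Livne, Pach and Sharir to conclude that $\Fminus = F \setminus A(S)$ (equivalently, the free space with respect to the obstacle $\mathcal O$ together with the discs $\D_1(s)$) has complexity $O(m+n)$ in one stroke; it then observes that each $R(t)$ lies in $\D_2(t)\setminus A(S)$ restricted to $F$, which is of constant complexity aura-wise by the separation bounds; and finally it assembles $\overline F$ and $F^*$ from these pieces. You instead view everything as a single arrangement of $\partial F$, $\partial A(S)$, and $\partial A(T)$, and bound the number of crossings directly by packing and charging. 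This avoids the pseudo-disc citation, which is nice, but it pushes the technical burden into the charging step~(b): as you correctly flag, a long straight edge of $\partial F$ can be pierced by many target auras with no vertex of $\partial F$ inside any of them, so charging only to generating polygon features or to vertices of $\partial F$ inside $\D_2(t)$ is not yet airtight. The missing ingredient is the connectivity of $\D_2(t)\cap F$ (Lemma~\ref{lem:adler2}) combined with the observation, made explicitly in the paper's proof of Lemma~\ref{lem:motion_graph_edges}, that each point of $\partial \D_2(t)\cap F$ must be the center of an obstacle-free unit disc, so $\partial\D_2(t)$ can enter and leave $F$ only a constant number of times; charging the ``through-pass'' crossings to $t$ itself then closes the gap, since each target contributes only $O(1)$ of them. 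With that addition your argument is complete, and the sweep-line construction and the identification of remote faces are exactly as in the paper. The trade-off is that the paper's use of the Kedem et al.\ result handles the start-aura side more cleanly, while your approach is self-contained and makes the role of the separation bounds more explicit.
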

\begin{proof}
The proof of the lemma follows from the fact that a Minkowski sum has linear complexity in the number of elements plus the complexity of the elements~\cite{kedem1986union}.
Let set $C$ be the union of the obstacles (complement of the workspace) and the robot discs $\D_1(s)$ for all $s\in S$.
Then, the region $\Fminus=F\setminus A(S)$ has complexity $O(m + n)$ and can be generated in $O((m+n)\log(m+n))$ time.

The set of remote components $R(t)$ of a target position $t$ is a subset of $\D_2(t)\setminus A(S)$ restricted to the free space $F$.
Due to the separation bounds, $\D_2(t)\setminus A(S)$ is of constant complexity, and thus $R(t)$ has complexity $O(n)$.
The set of remote components $R$ thus has complexity $O(m + n)$ and can be generated in $O((m+n)\log(m+n))$ time.


Finally, the free space subsets $\overline{F}=F\setminus \bigcup R$ and $F^*=\Fminus\setminus \bigcup R$ 
all use segments from the existing sets with complexity $O(m + n)$, therefore their complexity is also bounded by $O(m+n)$ and can be computed in $O((m+n)\log(m+n))$ time.
\end{proof}


\begin{figure}[b]
    \centering
    \includegraphics{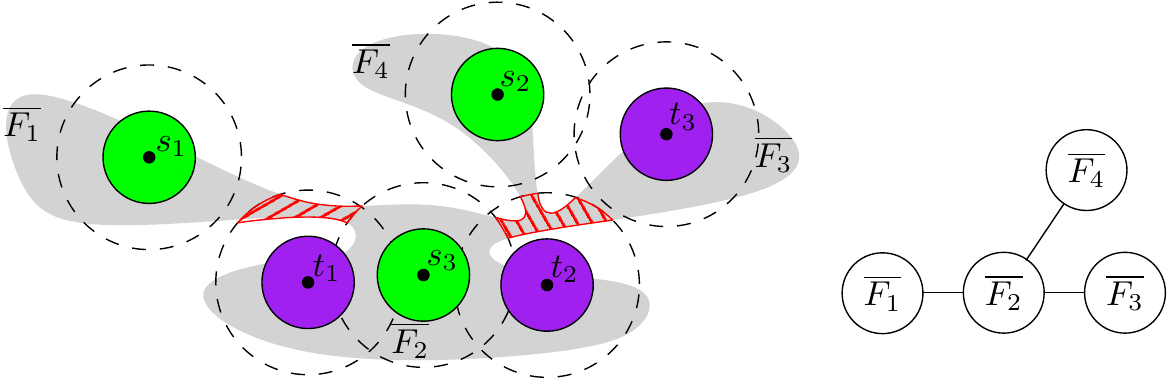}
    \caption{An example with two blockers $t_1$ and $t_2$, and their associated blocking areas shown in red. The corresponding residual components graph $H$ is illustrated on the right side. Note that there is no edge between $\overline{F_{4}}$ and $\overline{F_{3}}$, since the blocker $t_2$ is not located in either of them.}
    \label{fig:example4}
\end{figure}

\subparagraph{Residual components graph.}
We define the \emph{residual components graph} as $H = (V^H, E^H)$ where $V^H$ contains one vertex for each residual component of $\overline{F}$ (see Figure~\ref{fig:example4}).
There is an edge between two vertices $v_1, v_2 \in V^H$ if their respective residual components are separated by a single blocking area $b \in B$ and its associated blocker $t$ resides in the respective residual component of either $v_1$ or $v_2$. 
Although a single blocking area in $B$ can divide $\overline{F}$ into more than two connected components, such a blocking area does not create a cycle in $H$. This is due to the definition of an edge in $H$ which requires the associated blocker to be in one of the two components.
\begin{restatable}{lemma}{AdjacentBlockingAreas}
\label{lem:adjacent_blocking_areas}
Any blocking area $b \in B$ shares a boundary with the residual component containing its associated blocker $t$.
\end{restatable}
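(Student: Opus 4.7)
The plan is to combine the blocking path from Lemma~\ref{lem:disjoint-blocking-paths} with the external connectivity of $\overline{F}_t$ around $\D_2(t)$. Concretely, I would take a blocking path $\pi$ from $t$ to a point of $b$ lying in $\D_2(t)\cap F$ and avoiding all blocking areas of other blockers, and let $p$ be the first point at which $\pi$ meets $\partial b$. Because $b$ is a connected component of $\Dminus_2(t)=\D_2(t)\cap F\setminus A(S)$, its boundary inside $\D_2(t)\cap F$ lies on $\partial A(S)$; hence, immediately before $p$, the path is contained in the aura $\D_2(s)$ of some start position $s\in S$, and $p\in\partial b\cap\partial\D_2(s)$.

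Next, I would establish that the full aura $\D_2(s)\cap F$ lies in a single residual component of $\overline{F}$. By Lemma~\ref{lem:adler2} applied at~$s$, $\D_2(s)\cap F$ is connected. Moreover, every remote component is a subset of some $\Dminus_2(t')=\D_2(t')\cap F\setminus A(S)$ and is therefore disjoint from $A(S)\supseteq\D_2(s)$, so $\D_2(s)\cap F\subseteq\overline{F}$. Consequently, $\D_2(s)\cap F$ sits in exactly one residual component.

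The heart of the argument is to identify this residual component with $\overline{F}_t$. The main component $M$ of $\Dminus_2(t)$ contains $t$ and is contained in $\overline{F}_t$; moreover $\pi$ leaves $M$ first through some lens $L_{j_1}=\D_2(s_{j_1})\cap\D_2(t)\cap F$ adjacent to $M$, so by adjacency the full aura $\D_2(s_{j_1})\cap F\subseteq \overline{F}_t$. Since $|s_{j_1}-t|\ge 2$, this aura extends outside $\D_2(t)$, so $\overline{F}_t$ already contains an external neighborhood of $\partial\D_2(t)$. Using the simple connectivity of $F$ (Lemma~\ref{lem:adler1}) together with the connectedness of each $\D_2(s_j)\cap F$ (Lemma~\ref{lem:adler2}), I would show that $\overline{F}_t$ reaches the external portion of $\D_2(s)\cap F$ by routing around $\D_2(t)$ in $F\setminus\D_2(t)$, and therefore $\D_2(s)\cap F\subseteq\overline{F}_t$. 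Since $p\in\partial(\D_2(s)\cap F)\cap\partial b$, this yields $p\in\partial\overline{F}_t\cap\partial b$ and proves the lemma.

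The main obstacle is precisely this last linking step: within $\D_2(t)\cap F$, the main component $M$ and the aura $\D_2(s)$ adjacent to $b$ may be separated by intervening remote components of $t$, so there is in general no chain of lens regions connecting them inside $\D_2(t)$. The proof therefore must route through $F\setminus\D_2(t)$, where the delicate point is to show that the external portions of the relevant auras all lie in the same residual component as $\overline{F}_t$; this is where simple connectivity of $F$ and the connectedness of each full aura $\D_2(s_j)\cap F$ are crucially used.
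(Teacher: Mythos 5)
Your proposal correctly identifies a subtlety that the paper's two-line proof glosses over: Lemma~\ref{lem:disjoint-blocking-paths} only guarantees a blocking path $\pi\subset\D_2(t)$ that avoids blocking areas of \emph{other} blockers $y\neq t$, so $\pi$ could in principle still leave $\overline{F}$ by entering a different remote component (or even a different blocking area) of $t$ itself before it reaches $b$. The paper simply asserts that $\pi$ ``does not cross any other blocking area'' and concludes adjacency ``by the definition of a residual component,'' implicitly treating $\pi$ as lying in $\overline{F}_t\cup b$; your reading of this as requiring justification is sound, and your observation that $\D_2(s)\cap F$ (for the start $s$ whose aura bounds $b$ at the entry point $p$) is connected, disjoint from every remote component, and hence contained in a single residual component is also correct.

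However, the proposal does not close the gap it identifies. The step that matters --- that the residual component containing $\D_2(s)\cap F$ is in fact $\overline{F}_t$ --- is only sketched (``I would show that~$\ldots$''), and the mechanism you propose for it is not viable as stated. Routing from the external portion of $\D_2(s_{j_1})\cap F$ to the external portion of $\D_2(s)\cap F$ inside $F\setminus\D_2(t)$ cannot work in general, because $F\setminus\D_2(t)$ is \emph{disconnected}: $b\subset\D_2(t)$ disconnects $F$ by definition of a blocking area, and removing the larger set $\D_2(t)$ only disconnects further; the same is true for every other blocking area of~$t$. So the very object you are reasoning about destroys the connectivity your detour needs, and simple connectivity of $F$ together with connectedness of the individual auras does not by itself repair this. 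Since your last paragraph explicitly flags this as the ``main obstacle'' and leaves it unresolved, the proposal has a genuine gap at exactly the point where a proof is required. What would actually be needed (and what the paper implicitly relies on) is an argument internal to $\D_2(t)\cap F$ that the blocking path can be taken to enter $b$ directly from $\overline{F}_t$, e.g.\ by showing that in the simply connected region $\D_2(t)\cap F$ the aura bordering $b$ at $p$ is itself adjacent to the component of $\Dminus_2(t)$ containing $t$; the outside-the-disc detour is the wrong tool.
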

\begin{proof}
This follows directly from Lemma~\ref{lem:disjoint-blocking-paths}.
There exists a blocking path $\pi$ which connects $b$ to $t$ and does not cross any other blocking area.
By the definition of a residual component, the blocking area must therefore be adjacent to the component containing $t$.
\end{proof}

\begin{restatable}{lemma}{BlockingGraphConnected}
\label{lem:blocking_graph_tree}
The residual components graph $H$ is a tree.
\end{restatable}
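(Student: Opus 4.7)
My plan is to establish both connectivity and acyclicity of $H$, leaning on the simple connectivity of $F$ (Lemma~\ref{lem:adler1}) and on Lemma~\ref{lem:adjacent_blocking_areas}. A preliminary observation that I expect to need is that non-blocking remote components of a target $t$ do not cause any additional disconnections of $F$ beyond those induced by blocking areas: such a component $r$ lies inside $\D_2(t)$, and by Lemma~\ref{lem:adler2} the set $\D_2(t)\cap F$ is connected, so one can reroute locally around $r$. Since the auras $\D_2(t_1)$ and $\D_2(t_2)$ of two distinct blockers are disjoint (because $\mu=4$), non-blocking components of different targets cannot interact. Consequently, the residual components of $\overline{F}$ coincide with the connected components of $F\setminus\bigcup B$, which I will call \emph{$B$-components}.

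For connectivity of $H$, take two residual components $C,C'$ and any path $\gamma\subset F$ from a point of $C$ to a point of $C'$. Each time $\gamma$ crosses a blocking area $b$ with blocker $t$, stepping from one $B$-component $C_a$ to another $C_b$, Lemma~\ref{lem:adjacent_blocking_areas} guarantees that the blocker's component $C_t$ also borders $b$. Hence $H$ contains the edges $(C_a,C_t)$ and $(C_t,C_b)$ (one of which degenerates to $(C_a,C_b)$ when $C_t\in\{C_a,C_b\}$). Concatenating these transitions over all crossings of $\gamma$ traces a walk in $H$ from $C$ to $C'$.

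For acyclicity I plan to count edges via a bipartite auxiliary graph $G$ whose vertex set consists of residual components on one side and blocking areas on the other, with $b\in B$ joined to each residual component it borders. Because $F$ is simply connected and each blocking area is a simply connected subset of $\D_2(t)\cap F$, I can insert the blocking areas into $F$ one at a time: adding a blocking area $b$ that is adjacent to $d$ previously existing components increases the component count by $d-1$ while adding $d$ edges to $G$, so $G$ stays a tree throughout. By the definition of $H$, each $b\in B$ with blocker $t$ contributes exactly $\deg_G(b)-1$ edges of $H$, namely one from $C_t$ to each other component bordering $b$ (and distinct blocking areas cannot give rise to the same edge, since otherwise $G$ would contain a cycle). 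Summing gives $|E^H| = \sum_{b\in B}(\deg_G(b)-1) = |E^G|-|B| = (|V^G|-1)-|B| = |V^H|-1$, which combined with connectivity shows that $H$ is a tree.

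The main obstacle I expect is making the preliminary observation precise: I need to argue carefully that removing a non-blocking remote component $r$ does not split any $B$-component, since a priori a detour ``around'' $r$ could exit and re-enter the same $B$-component via distant blocking areas. The argument will combine Lemma~\ref{lem:adler2} with the fact that the discs $\D_2(t)$ around different targets are disjoint, so that every detour needed can be taken inside $\D_2(t)\cap F$ and hence stays inside a single $B$-component.
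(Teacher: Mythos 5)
Your proof takes essentially the same route as the paper's: connectivity is argued by following a free-space path and turning each blocking-area crossing into edges of $H$ via Lemma~\ref{lem:adjacent_blocking_areas}, and acyclicity by inserting the blocking areas one at a time and counting --- your bipartite auxiliary graph $G$ is a cleaner formalization of the paper's one-line observation that each insertion adds one new edge per new vertex, and the remark that distinct blocking areas cannot give rise to the same $H$-edge (else $G$ would contain a $4$-cycle) is a useful extra check that the paper leaves implicit. Where you genuinely go further is the preliminary observation: you correctly point out that the argument needs the residual components of $\overline{F}=F\setminus\bigcup R$ to coincide with the connected components of $F\setminus\bigcup B$, since otherwise a non-blocking remote component could create an extra residual component with no incident edges in $H$ at all (edges in $H$ are only defined across blocking areas). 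The paper's proof silently assumes this identity without stating it. Your plan to close the gap --- working inside $\D_2(t)\cap F$, invoking Lemma~\ref{lem:adler2}, the $\mu=4$ disjointness of distinct target auras, and the simple connectivity of $F$ from Lemma~\ref{lem:adler1} --- is the right one; when you carry it out, take care that the detour along $\partial r$ around a non-blocking remote component $r$ of $t$ must also be shown to avoid the \emph{other} remote components (including blocking areas) of the same target $t$, which share $\D_2(t)\cap F$ with $r$ and may have touching closures.
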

\begin{proof}
First we prove that $H$ is connected. Assume for contradiction that $H$ is not connected.
Then, there must be distinct residual components $x, y \in V^H$ which are not connected by a path in $H$.
Take arbitrary points $p_x \in x$ and $p_y \in y$. 
Since both $p_x$ and $p_y$ lie in $F$ and $F$ is connected, there exists a path $\pi \subset F$ which connects $p_x$ with $p_y$.
Additionally, given the monochromatic separation $\mu = 4$, the blocking areas of distinct blockers do not intersect, and therefore $\pi$ will alternate between a blocking areas and residual components.

Take an arbitrary blocking area $b_t \in B$ that is traversed by $\pi$, which is associated with a blocker target $t$.
Let $v$ and $w$ be the residual components adjacent to $b_t$ that $\pi$ traverses.
We now argue that $v$ and $w$ are connected in $H$.

The blocker $t$ must be in a residual component adjacent to $b_t$ by Lemma~\ref{lem:adjacent_blocking_areas}.
Let $z$ be the residual component containing $t$.
If $z$ is equal to either $v$ or $w$, meaning the blocker $t$ resides in either $v$ or $w$, then by definition there must be an edge between $v$ and $w$ as well, therefore they are connected. 
If $z$ is not $v$ nor $w$, meaning $t$ in a third residual component not equal to $v$ or $w$,
then there must be an edge between $v$ and $z$ and between $z$ and $w$, thus $v$ and $w$ are connected through $z$.
Applying this logic to all blocking areas along $\pi$ between residual components $x$ and $y$, we can conclude that $x$ and $y$ must be connected.

Next, we prove that $H$ is a tree. This follows from how $H$ was constructed:
Each blocking area $b\in B$, associated with some blocker $t$, splits $F$ into two or more connected components.
The node of $H$ corresponding to the residual component containing $t$ is connected in $H$ to the other residual components incident to $b$.
Thus, if we consider splitting $F$ and constructing $H$ incrementally, by considering the blocking areas one by one, we insert a new edge for each new node in $H$.
Initially we have one node in $H$ corresponding to the whole $F$.
Thus the number of nodes is one greater than the edges, and $H$ is a tree.
%
\end{proof}

The general idea of our algorithm is to use the residual components graph $H$ to help us split the problem into smaller subproblems.
Using the graph $H$, we will iteratively choose a leaf residual component $\overline{F}_i$ with a non-positive charge (recall that the charge of a component is the number of start positions minus the number of the target positions), and solve the subproblem restricted to that component using its motion graph, which we define shortly.
If afterwards $\overline{F}_i$ will require more robots, they will be moved from a neighboring residual component, ensuring that the blocking area is free for the robots to pass.

\subsection{The motion graph}
\label{sec:motion_graph}
We now introduce the motion graph, which captures \emph{adjacencies} between the start/target positions.
Similarly to~\cite{adler2015efficient}, the underlying idea of our algorithm is to always have the robots positioned on start or target positions and, using the motion graph, to move one robot at a time between these positions until all target positions are occupied. 

Recall that for now we assume that the free space consists of one connected component $F$.
For a free space $F$ with start positions $S$ and target positions $T$, 
we define the \emph{motion graph} $G = (V^G, E^G)$, where $V^G = S \cup T$.
The edges $E^G$ in $G$ are of two types: \emph{guaranteed} or \emph{blockable}, which we formally define below.
Guaranteed edges correspond to so called \emph{guaranteed paths}, where a path in the free space $F$ between $u,v\in S\cup T$ is said to be \emph{guaranteed} if it does not intersect the aura of any position other than $u$ and $v$.

Unlike guaranteed, blockable edges correspond to paths in $F$ that must cross blocking areas.
%
%
Our algorithm requires the motion graph $G$ to be connected.
However, as $\beta < 4$, without blockable edges the motion graph may be disconnected (see Figure~\ref{fig:blocking_area_motion_graph}).
Introducing blockable edges ensures that $G$ is connected.

\subparagraph{Guaranteed edges.}
First, we define the guaranteed edges in $E^G$ and show how to construct corresponding guaranteed paths.
Recall that we define the set $F^*$ to be the free space minus the auras of the start positions and the remote components, $F^* = F \setminus (\bigcup_{s \in S}\D_2(s) \cup \bigcup R)$.  
Consider a connected component $F^*_{j} \subset F^*$.
Note that $F^*_{j}$ may not be simply-connected, as it may contain holes due to subtracted auras of start positions. 
Abusing the notation, by $\partial F^*_{j}$ we refer to the outer boundary of $F^*_{j}$.
For $\partial F^*_{j}$, we create an ordered circular list $\Lambda_j$ of points along $\partial F^*_{j}$ as follows (see Figure~\ref{fig:motion_graph_example}).

\begin{figure}[t]
    \centering
    \includegraphics{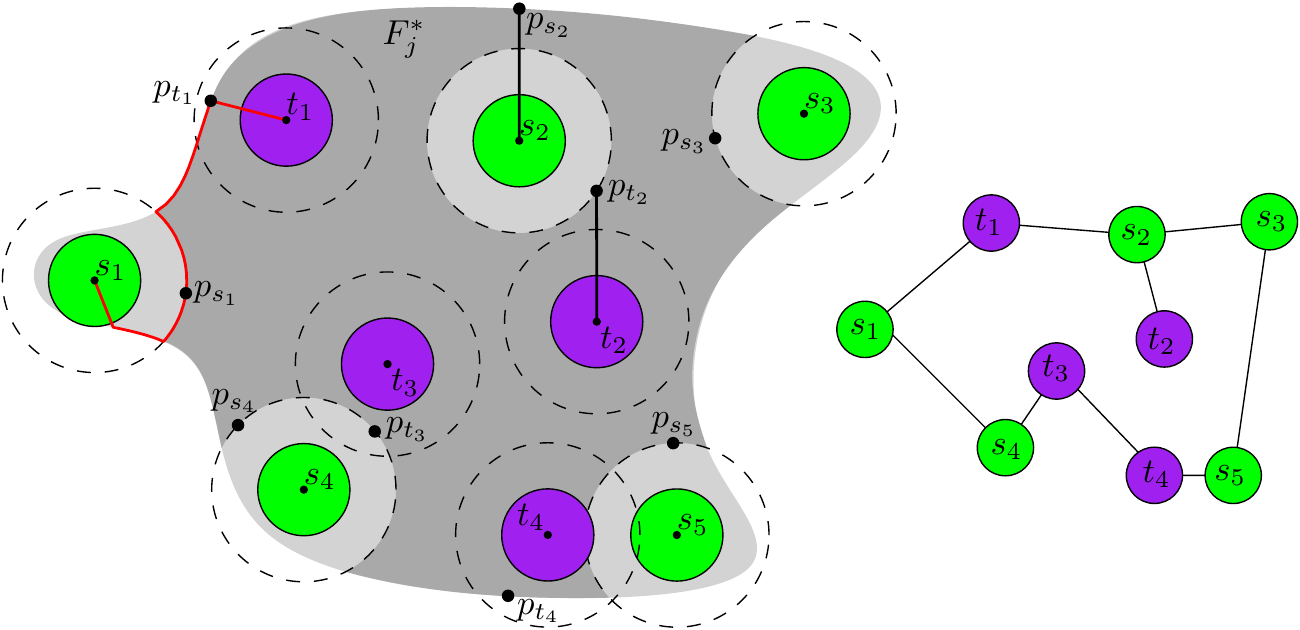}
    \caption{An illustration of generating the guaranteed edges in a single component $F^*_j$. Component $F^*_j$ is shown in dark grey;
    $\Lambda_j$ is $ \langle p_{s_1}, p_{t_1}, p_{s_2}, p_{s_3}, p_{s_5}, p_{t_4}, p_{t_3}, p_{s_4} \rangle$. 
    A path between a pair of adjacent positions is shown in red. The motion graph is shown on the right.}
    \label{fig:motion_graph_example}
\end{figure}

\begin{enumerate}[(i)]
\item For each target position $t \in T\cap F^*_{j}$ whose aura intersects $\partial F^*_{j}$, we pick a set of representative points $P_t$ such that $P_t$ contains one point on each connected component of $\partial F^*_{j} \cap \D_2(t)$.
The points $P_t$ are stored in $\Lambda_j$ based on their ordering along $\partial F^*_{j}$.


\item For each position $x$ which is (1)~either a target position in $F^*_j$ whose aura does not intersect $\partial F^*_j$, or (2)~a start position corresponding to a hole in $F^*_j$, we shoot a ray vertically upwards until it hits either $\partial F^*_{j}$ or the aura of another position $y$.
In the former case the first intersection point $p_x$ is added to $\Lambda_j$ as a representative point of $x$. In the latter case a guaranteed edge is added to $E^G$ connecting $x$ and $y$.

\item Now, consider a start position $s$ whose aura shares a boundary with $\partial F^*_{j}$.
Note that $\partial F^*_{j} \cap \D_2(s)$ is connected.
If we can pick a representative point $p_s$ on $\partial F^*_{j} \cap \D_2(s)$ such that there exists an unobstructed path in $F$ connecting $s$ to $p_s$, then we insert $p_s$ to $\Lambda_j$ based on its ordering along $\partial F^*_{j}$.
Otherwise, if for every choice of $p_s\in \partial F^*_{j} \cap \D_2(s)$ any path connecting $s$ to $p_s$ crosses an aura of some target position $t$, then we add a guaranteed edge to $E^G$ connecting $s$ and $t$ (for every such target position $t$).
Observe, that by the definition of remote components, if a path connecting $s$ to $p_s$ crosses the aura of $t$, it must cross it through the non-remote component of $t$.
Thus, there must exist a guaranteed path connecting $s$ and $t$.
\end{enumerate}
Now that $\Lambda_j$ is generated, we add a guaranteed edge to the motion graph between any two nodes in $G$ whose representative points are consecutive in $\Lambda_j$.
If multiple edges between two vertices and self-loops are generated, we remove them in a post-processing step.
We repeat this procedure for every connected component $F^*_j\subset F^*$.

\begin{figure}[t]
    \centering
    \includegraphics{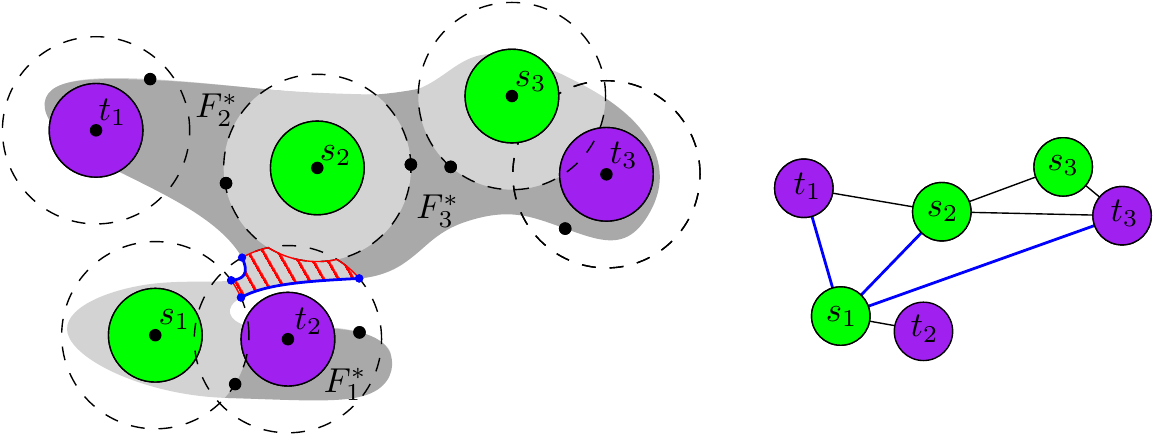}
    \caption{An illustration of a blocking area (in red) with its free boundary (in blue). On the right, the motion graph is shown, with the guaranteed edges (in black) and blockable edges (in blue).}
    \label{fig:blocking_area_motion_graph}
\end{figure}

\subparagraph{Blockable edges.}
For any blocking area $b \in B$ and its associated blocker $t$, each section of $\partial b$ is either shared with (1) the boundary of the aura of $t$, (2) with the boundary of the aura of some start position in $S$, or (3) with $\partial F$.
See Figure~\ref{fig:blocking_area_motion_graph} for an illustration.
We call a section of $\partial b$ which is shared with $\partial F$ a \emph{free boundary} segment of $b$.
For any free boundary segment of $b$ with endpoints $x$ and $y$, we assign a set of \emph{incident} positions in $S\cup T$ to $x$ and to $y$ (see below for details).
We then add a blockable edge to the motion graph between every pair of incident positions of $x$ and $y$ respectively.
Consider an endpoint $x$ of a free boundary segment of $b$.
The set of incident positions of $x$ is defined as follows.
\begin{enumerate}[(i)]
\item If $x$ is also an endpoint of a section of $\partial b$ that is shared with $\partial D_2(s)$ for $s \in S$, then $s$ is the only incident position for $x$.
\item If $x$ is also an endpoint of a section of $\partial b$ that is shared with $\partial D_2(t)$, then $x$ lies on the boundary of a component of $F^*$.
Let that component be $F^*_j$. 
Based on the position of $x$ on $\partial F^*_{j}$, using $\Lambda_j$, we find the predecessor and the successor points of $x$ in $\Lambda_j$.
By construction of $\Lambda_j$, these points are representative of some positions in $S \cup T$.
We select those positions as the incident positions for $x$.
The special case that $\Lambda_j$ is empty, is handled separately and is explained next.
\end{enumerate}
%
For the special case when $\Lambda_j$ of $F^*_j$ is empty, if $b$ is the only blocking area incident to $\partial F^*_{j}$, then $F^*_j$ does not contain any position in $S \cup T$, and can be ignored.
Otherwise, if $F^*_j$ is adjacent to another blocking area $b'$ (of some blocker $t'$), then from $x$ we follow $\partial F^*_{j}$ until we reach $\partial b'$ at $x'$, which must be the endpoint of a free boundary segment of $b'$ (see Figure~\ref{fig:blocking_area_two}).
Let $y'$ be the other endpoint of that free boundary segment. 
We now select the incident positions of $y'$ as the incident positions of $x$, i.e., we add a blockable edge between the incident positions of $y$ and those of $y'$.
This results in blockable edges associated with two blocking areas $b$ and $b'$.

\begin{figure}[h]
    \centering
    \includegraphics{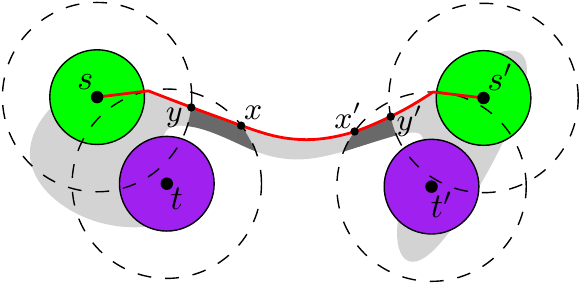}
    \caption{The special case when a blockable edge is added across two blocking areas.}
    \label{fig:blocking_area_two}
\end{figure}

\subparagraph{Translating motion graph edges to free space paths.}
Consider a component $F^*_j \subset F^*$, and let $\Lambda_j$ be the circular list of representative positions constructed for $F^*_j$.
Let $u$ and $v$ be two positions whose representative points are adjacent in $\Lambda_j$. By definition, $(u,v)\in E^G$ is a guaranteed edge, and we claim that there exists a guaranteed path between $u$ and $v$ in $F$.
We construct such a path $\pi_{uv}$ in the following way.
\begin{enumerate}[(i)]
\item First, $\pi_{uv}$ connects $u$ to its representative point $p_u$ by either following an unobstructed path from $u$ to $p_u$ within $u$'s aura, or by following the vertical ray used to generate $p_u$ outside of $u$'s aura.
\item Next, $\pi_{uv}$ connects $p_u$ to the representative point $p_v$ of $v$ by following  $\partial F^*_{j}$.
\item Finally, $\pi_{uv}$ connects $p_v$ to $v$ similarly to (i).
\end{enumerate}

Now consider the case when a guaranteed edge $(u,v)$ is constructed without adding the representative points to $\Lambda_j$.
If $(u,v)$ is constructed according to the case (ii) of the definition of the guaranteed edges, and without loss of generality the vertical ray emanates from $u$, then the guaranteed path $\pi_{uv}$ consists of the vertical segment $u p_u$ and an unobstructed path connecting the representative point $p_u$ to the node $v$ within the aura of $v$.
If $(u,v)$ is constructed in case (iii), and without loss of generality $u\in S$ and $v\in T$, then the guaranteed path $\pi_{uv}$ consists of a path from $u$ until the first intersection with the aura of $v$ and an unobstructed path to $v$ within its aura.

The paths for blockable edges are constructed in the following way.
Each free boundary segment $(x,y)$ of every blocking area $b$ contributes up to four edges between the incident positions of its endpoints.
Consider a blockable edge $(u_x,v_y)$ between an incident position $u_x$ of $x$ and an incident position $v_y$ of $y$.
The corresponding path consists of three parts.
\begin{enumerate}[(i)]
\item From $u_x$ to $x$. This part is generated similarly to the part (i) for guaranteed edges.
\item From $x$ to $y$. This part follows the free boundary of $b$ between $x$ and $y$.
\item From $y$ to $v_y$. This part is generated similarly to the part (iii) for guaranteed edges.
\end{enumerate}

The following lemmas prove five properties of the motion graph, which will be used to derive the correctness of the algorithm and analyze its complexity.


\begin{restatable}{lemma}{GuaranteedPaths}
\label{lem:guaranteed_paths}
For a motion graph $G$, there exists a guaranteed path in $F$ for each guaranteed edge in $G$.
\end{restatable}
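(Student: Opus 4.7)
The plan is to verify the lemma by case analysis on how each guaranteed edge is introduced in the motion-graph construction. There are three such rules: (a) an edge between two positions whose representatives are consecutive in $\Lambda_j$; (b) an edge introduced in case (ii) when the vertical ray from a position $u$ first hits the aura $\D_2(v)$ of another position $v$; and (c) an edge introduced in case (iii) when a start $s$ cannot reach any point of $\partial F^*_j\cap\D_2(s)$ in $F$ without crossing the aura of some target $t$. In each case the construction already exhibits a candidate path, so the work is to check that it lies in $F$ and avoids every aura except those of its endpoints. Three facts will be used repeatedly: the separations $\mu=4$ and $\beta=2$ together imply that the open auras of distinct positions in $S\cup T$ are pairwise disjoint; $\partial F^*_j$ is disjoint from $A(S)$ and from every remote component by the very definition of $F^*$; and by Lemma~\ref{lem:adler2} the set $\D_2(x)\cap F$ is connected for every $x\in S\cup T$, so an ``end piece'' inside a single aura can always be routed.

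First I would handle rule (a). The path $\pi_{uv}$ has three pieces: an arc of $\partial F^*_j$ from $p_u$ to $p_v$, and two end pieces lying respectively inside $\D_2(u)\cap F$ and $\D_2(v)\cap F$. The arc piece lies in $\partial F^*_j\subseteq F\setminus A(S)$, so it meets no start aura; it also cannot meet the aura of any target other than $u$ or $v$, because any such aura touching $\partial F^*_j$ strictly between $p_u$ and $p_v$ would, by construction of $\Lambda_j$, have contributed its own representative there, contradicting consecutivity. The two end pieces stay inside a single aura and hence avoid all the others by the disjointness above.

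Next I would handle rule (b). The segment from $u$ to $p_u$ is part of the vertical ray whose defining property is that $\D_2(v)$ is the \emph{first} foreign aura it meets, so the segment avoids every aura other than $\D_2(u)$ and $\D_2(v)$; it also stays in $F$, since leaving $F$ would have forced the ray to terminate at $\partial F^*_j$ before reaching $p_u$. The remaining piece from $p_u$ to $v$ is routed inside $\D_2(v)\cap F$ by Lemma~\ref{lem:adler2}, and again avoids every other aura by disjointness.

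The main obstacle is rule (c). One must make explicit the sentence appearing in the construction, namely that every path in $F$ from $s$ to a point of $\partial F^*_j\cap\D_2(s)$ that crosses $\D_2(t)$ must do so through the non-remote component of $t$: a crossing whose interior piece lay entirely inside a remote component could be rerouted along that component's boundary without entering $\D_2(t)$ at all, contradicting the hypothesis of case (iii) that no $\D_2(t)$-avoiding path exists. Granting this, one selects such a path, truncates it at its first entry point on $\partial\D_2(t)$ into the non-remote component, and then continues inside the non-remote component to $t$; since the non-remote component lies in $\Dminus_2(t)\subseteq F\setminus A(S)$ and all auras are pairwise disjoint, the resulting $s$-to-$t$ path is guaranteed, completing the verification for all three rules.
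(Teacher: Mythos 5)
Your case analysis on the three edge-creation rules is a sound way to organize the verification, and your handling of rule~(c) — making explicit why a path from $s$ to a boundary point that crosses $\D_2(t)$ must pass through the non-remote component of $\Dminus_2(t)$ — is cleaner than what the paper says. However, the proof has a genuine gap in its stated toolbox: the claim that ``$\mu=4$ and $\beta=2$ together imply that the open auras of distinct positions in $S\cup T$ are pairwise disjoint'' is false. With $\beta=2$ a start $s$ and a target $t$ at distance $2\le d(s,t)<4$ have overlapping auras $\D_2(s)\cap\D_2(t)\ne\emptyset$; the separation only guarantees that $t\notin\D_2(s)$ and $s\notin\D_2(t)$, not disjointness. (Overlapping bichromatic auras are, in fact, the whole source of the difficulty that blockers and blockable edges exist.)

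This incorrect fact is load-bearing in your rules~(a) and~(b). In rule~(a) you dismiss the end pieces with ``they stay inside a single aura and hence avoid all the others by disjointness,'' but a path from a target $t$ to $p_t$ living inside $\D_2(t)\cap F$ could a priori wander through an overlapping start aura $\D_2(s)$. The correct reason it does not is structural: for a case-(i) target the end piece is routed through the non-remote component of $\Dminus_2(t)$, which by definition excludes $A(S)$ and — because $\mu=4$ keeps it disjoint from other targets' discs — lies in $F^*$; for a case-(iii) start the construction only adds a representative point after explicitly checking that an unobstructed path exists, and otherwise falls into rule~(c). Similarly, in rule~(b) the segment from $p_u$ to $v$ inside $\D_2(v)\cap F$ is not protected by disjointness; one must argue it can be routed in $\D_2(v)\cap F$ away from the (possibly overlapping) auras, again using that $\mu=4$ kills same-colour overlaps and that $p_u$ sits on $\partial\D_2(v)$ at a point where the ray exits $F^*_j$. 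The paper's own proof is compressed, but it never asserts pairwise disjointness; it instead appeals to the fact that the construction updates $p_x$ and inserts a direct edge whenever the straight route would cross a foreign aura. You should replace the false disjointness premise with the two facts that actually hold ($\mu=4$ gives monochromatic disjointness, $\beta=2$ gives non-containment) and then close each end piece using the construction's own guarantees, as the paper does.
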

\begin{proof}
	We argue that each portion of a path created in $F^*_{j}$ between two positions $x$ and $y$ is guaranteed, i.e., it does not cross the aura of any position besides $x$ and~$y$. 
	
	The motion from a position $x$ to its representative point $p_x$ cannot be blocked by its construction.
	Given bichromatic separation $\beta=2$, the position $x$ itself is not inside the aura of another position.
	If the path from $x$ to $p_x$ crosses some other aura at any point, then the construction will update $p_x$ to the intersection and connect $x$ to this position.
	Thus, the path from a position to its representative point cannot be blocked by another position.
	
	Given the monochromatic separation and the definition of $F^*$, a point on the boundary of $F^*_{j}$ can only ever be in the aura of a single target position.
	Additionally, for each segment of the boundary that intersects the aura of a target we choose a representative point.
	Therefore, the portion of the boundary between representative points that are adjacent on $\Lambda_j$ can only intersect the aura of those two representative points.
	As a result, the motion along the boundary of $F^*_{j}$ between adjacent representative points $p_x$ and $p_y$ cannot intersect the aura of a third position not equal to $x$ or $y$, and thus it is guaranteed.
\end{proof}

\begin{restatable}{lemma}{UnblockedPath}
\label{lem:unblocked_path}
For a motion graph $G$, there exists a path in $G$ consisting solely of  guaranteed edges between any two positions inside the same residual component $\overline{F}_j\in\overline{F}$.
\end{restatable}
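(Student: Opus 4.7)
The plan is to lift a continuous curve from $u$ to $v$ inside $\overline{F}_j$ to a walk of guaranteed edges in $G$. Take any two positions $u, v \in (S \cup T) \cap \overline{F}_j$ and choose a generic continuous path $\rho \colon [0,1] \to \overline{F}_j$ from $u$ to $v$. Because $\mu = 4$ forces distinct start auras to have pairwise disjoint interiors, every point of $\overline{F}_j$ lies either in $F^* = \overline{F}_j \setminus A(S)$ or in a unique $\D_2(s)$ with $s \in S$, so $\rho$ decomposes into a finite alternating sequence: sub-arcs through $F^*$-components $F^*_{j, i_0}, F^*_{j, i_1}, \ldots, F^*_{j, i_k}$ interleaved with sub-arcs through start auras $\D_2(s_1), \ldots, \D_2(s_k)$, with $u$ associated to $F^*_{j, i_0}$ (lying inside it when $u \in T$, or having its own aura bounding it when $u \in S$) and $v$ associated to $F^*_{j, i_k}$ analogously. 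Each intermediate $s_\ell$ has its aura bordering both $F^*_{j, i_{\ell-1}}$ and $F^*_{j, i_\ell}$, since $\rho$ crosses those shared aura boundaries transversally.

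For each $F^*$-component $F^*_{j, i_\ell}$ visited by $\rho$, I then argue that all positions whose auras are incident to $F^*_{j, i_\ell}$ are connected to one another in $G$ by guaranteed edges. The construction of Section~\ref{sec:motion_graph} inserts every such position into the circular list $\Lambda_{j, i_\ell}$ (cases (i) and the first branch of (iii)) or attaches it directly by guaranteed edges to another position (case (ii) and the second branch of (iii)). Because consecutive entries of $\Lambda_{j, i_\ell}$ are joined in $G$ by guaranteed edges, any two positions attached to $\Lambda_{j, i_\ell}$ are connected by a walk of such edges. In particular, $s_\ell$ and $s_{\ell+1}$, both of which border $F^*_{j, i_\ell}$, are connected in $G$ by guaranteed edges; and $u$ is likewise connected to $s_1$ through $\Lambda_{j, i_0}$, and $s_k$ to $v$ through $\Lambda_{j, i_k}$. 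Concatenating these sub-walks yields the required guaranteed $G$-path from $u$ to $v$.

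The main obstacle I anticipate lies in the degenerate sub-cases of the construction: a target whose aura lies strictly inside $F^*_{j, i_\ell}$, or a start whose only access paths to $\partial F^*_{j, i_\ell}$ pierce some target aura, contribute no representative on $\partial F^*_{j, i_\ell}$ but only a direct guaranteed edge to another position. A small induction is needed to conclude that the resulting chain of ray-edges is finite and terminates at an entry of $\Lambda_{j, i_\ell}$: the key point is that vertical rays inside the bounded region $F^*_{j, i_\ell}$ cannot form a cycle, so every such chain must eventually reach $\partial F^*_{j, i_\ell}$ and hence $\Lambda_{j, i_\ell}$. Once this auxiliary claim is in hand, the second-paragraph argument goes through verbatim and proves the lemma.
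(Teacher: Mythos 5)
Your proof follows essentially the same route as the paper's: both decompose a path from $u$ to $v$ inside $\overline{F}_j$ into sub-arcs through components of $F^*$ separated by start auras $\D_2(s_1),\ldots,\D_2(s_k)$, and then connect consecutive positions in the sequence $u, s_1, \ldots, s_k, v$ by guaranteed edges because each adjacent pair borders a common $F^*$-component. The paper's version is terser (invoking Lemma~\ref{lem:guaranteed_paths} somewhat loosely to stand in for the claim that positions incident to the same $F^*_j$ are joined by guaranteed edges in $G$), whereas you make that step explicit via the circular lists $\Lambda_j$ and correctly flag the degenerate sub-cases where a position enters $G$ through a ray-edge rather than a representative point on $\partial F^*_j$.
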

\begin{proof}
	Take two positions $x, y \in V^G$ that are inside the same residual component, as defined in Section~\ref{sec:preliminaries}.
	This means that there exists a path $\pi$ through the free space which does not cross any blocking area.
	If $x$ and $y$ reside in the same component of $F^*$, then by Lemma~\ref{lem:guaranteed_paths} there exists a guaranteed path and we are done.
	Otherwise, if $x$ and $y$ are in different components of $F^*$, then $\pi$ must cross auras of some start positions 
	in $S$ that split $F^*$ into multiple components.
	Let $s_1, \ldots, s_k$ be the start positions whose auras $\pi$ intersects.
	Then all adjacent positions in the sequence $x, s_1, \ldots, s_k, y$ will share a residual component that they either reside in or have a boundary with.
	By Lemma~\ref{lem:guaranteed_paths}, there must therefore exist a guaranteed path between each adjacent position in this sequence.
	Using those individual paths, the vertices $x$ and $y$ are connected with a path that only uses guaranteed edges.
\end{proof}

\begin{restatable}{lemma}{MotionGraphConnected}
	The motion graph $G$ is connected.
\end{restatable}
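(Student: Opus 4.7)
The plan is to exploit the structure of the residual components graph $H$ together with Lemma~\ref{lem:unblocked_path}. Since Lemma~\ref{lem:unblocked_path} already implies that any two vertices of $V^G$ lying in the same residual component are connected in $G$ by guaranteed edges, and since Lemma~\ref{lem:blocking_graph_tree} asserts that $H$ is a tree and hence connected, it suffices to show that for every edge of $H$---i.e., for every pair of residual components $\overline{F}_1, \overline{F}_2$ separated by a single blocking area $b \in B$ whose associated blocker $t$ resides in one of them---the motion graph $G$ contains a blockable edge between some vertex of $V^G \cap \overline{F}_1$ and some vertex of $V^G \cap \overline{F}_2$.

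Next I would show that every blocking area $b$ contains at least one \emph{free boundary segment}, i.e.\ a portion of $\partial b$ that lies on $\partial F$. The boundary of $b$ consists of arcs of $\partial \D_2(t)$, arcs of start auras $\partial \D_2(s)$ for $s \in S$, and arcs of $\partial F$. Using $\mu = 4$ and the fact that no start aura can be entirely contained inside $\D_2(t)$, each start aura subtracts only a boundary ``bite'' from $\D_2(t) \cap F$, so $b$ is simply connected. If $b$ had no free boundary arc, then $b$ would be a simply connected set contained in the interior of the simply connected free space $F$ (Lemma~\ref{lem:adler1}), which cannot disconnect $F$---contradicting the definition of a blocking area. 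So I can fix a free boundary segment of $b$ with endpoints $x$ and $y$.

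Finally, I would inspect the construction of blockable edges and show that the incident positions assigned to $x$ and to $y$ lie on opposite sides of the cut $b$, hence in the two distinct residual components $\overline{F}_1$ and $\overline{F}_2$. The two arcs of $\partial F$ emanating from $x$ and from $y$ just outside $b$ lie in the two residual components separated by $b$. In case~(i) of the construction, the incident position of an endpoint is a start $s$ whose aura meets the endpoint on that side, which lies in the corresponding residual component; in case~(ii), the incident positions are picked from $\Lambda_j$ of the $F^*$-component adjacent to the endpoint on that side, which also lies on that side of the cut. The special situation in which the adjacent $F^*$-component has empty $\Lambda_j$ is exactly the ``bridging'' case of the construction, which instead uses incident positions at a neighbouring blocking area $b'$ to still reach the correct residual component on the other side. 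The main obstacle I anticipate is precisely this last step: a careful geometric case analysis confirming that the incident positions at the two endpoints of any free boundary segment indeed fall into the two distinct residual components separated by the blocking area.
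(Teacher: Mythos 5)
Your proof takes essentially the same route as the paper's: use Lemma~\ref{lem:unblocked_path} to get connectivity within each residual component, use Lemma~\ref{lem:blocking_graph_tree} to get connectivity of the residual components graph $H$, and observe that the blockable edges constructed per blocking area bridge adjacent residual components. The paper's own proof is terser than yours---it simply asserts that ``the procedure done for each blocking area will ensure that two residual components that are adjacent in $H$ also have edges in the motion graph between two positions in either component''---whereas you correctly identify and partially fill in the two facts that justify that assertion: (a) every blocking area has a free boundary segment, since a simply connected region strictly interior to the simply connected free space could not disconnect it, and (b) the incident positions at the two endpoints of such a segment lie in the two residual components that the blocking area separates. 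Your candid remark that step~(b) still needs a careful case analysis is fair, but it is not a gap relative to the paper, which glosses over the same point; if anything, your version makes the dependence on the construction of blockable edges (including the bridging case across two blocking areas) more explicit and would be the natural place to tighten the argument.
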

\begin{proof}
	By Lemma~\ref{lem:unblocked_path} we know there exists a path in the motion graph between any positions in the same residual component.
	By Lemma~\ref{lem:blocking_graph_tree}, the residual components graph $H$ is connected.
	The procedure done for each blocking area will ensure that two residual components that are adjacent in $H$ also have edges in the motion graph between two positions in either component.
	Combining these results, the motion graph $G$ must be connected.
\end{proof}

\begin{restatable}{lemma}{MotionGraphEdges}
	\label{lem:motion_graph_edges}
	The number of edges $|E^G|$ in the motion graph $G$ is bounded by $O(m)$.
\end{restatable}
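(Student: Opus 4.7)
The plan is to show that the motion graph $G$ is planar and then invoke Euler's formula: since $|V^G|=|S|+|T|=2m$, a simple planar graph on $|V^G|$ vertices has at most $3|V^G|-6 = O(m)$ edges. The post-processing step removes self-loops and multi-edges, so we may assume $G$ is simple.

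To establish planarity, I would exhibit an explicit planar embedding of $G$ in the workspace: place each vertex $v\in V^G$ at its position in $F$, and draw each edge along the path in $F$ constructed in the paragraph ``Translating motion graph edges to free space paths''. I would then argue that any two distinct edge-paths can be routed so that they do not cross. For guaranteed edges generated from a single circular list $\Lambda_j$, the ``boundary portions'' of the paths are arcs along $\partial F^*_j$ between consecutive representative points and are pairwise interior-disjoint by construction. The ``tail portions'' that connect each representative point $p_u$ to its position $u$ lie either inside $\D_2(u)$ or along the vertical ray used to create $p_u$; since only $O(1)$ positions lie within distance 4 of $u$ by the packing implied by $\mu=4$, these tails can be ordered consistently with the cyclic order of $\Lambda_j$ inside $\D_2(u)$ and therefore drawn without mutual crossings. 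The direct guaranteed edges from cases (ii) and (iii) are also confined to the auras of their endpoints.

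For blockable edges, I would use Lemma~\ref{lem:disjoint-blocking-paths} and the monochromatic separation $\mu=4$: the blocking areas of distinct blockers lie in disjoint aura discs $\D_2(t)$, hence blockable paths associated with different blocking areas are pairwise disjoint. Within a single blocking area $b$, each blockable path consists of a tail into the aura of an incident position followed by an arc along the free boundary of $b$; these paths enter the aura of a given incident position along one of $O(1)$ distinguishable angular arcs, so they can be drawn without crossings among themselves and without crossing the guaranteed-edge tails sharing the same aura. Combining these observations yields a crossing-free drawing of $G$, and Euler's formula gives $|E^G|=O(m)$.

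The main obstacle is the local routing argument around each vertex: several guaranteed tails (from multiple components $F^*_j$ adjacent to the same aura) and several blockable tails may all want to enter the same aura $\D_2(u)$. The packing bound from $\mu=4$ keeps the number of other positions within distance~$4$ of $u$ to $O(1)$, so only $O(1)$ distinct ``directions'' of entry arise, and one can cyclically order the entry points on $\partial \D_2(u)$ to match the cyclic order of the corresponding endpoints in the various lists $\Lambda_j$ and along the free boundaries of adjacent blocking areas. Making this consistency check rigorous, rather than the high-level counting, is where the technical care is needed.
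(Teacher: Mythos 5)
Your route (planarity plus Euler's formula) is genuinely different from the paper's proof, which establishes $|E^G|=O(m)$ by directly bounding the degree of every vertex of $G$ by a constant. The trouble is that you have not actually proven planarity; the step you flag as needing ``technical care'' is the crux, and as written it is a gap, not a detail. The paths $\pi_{uv}$ constructed by the paper are nowhere claimed to be pairwise non-crossing, so your plan requires proving (or re-routing to obtain) a non-crossing property the paper never needs. Moreover, your sketch rests on ``only $O(1)$ positions lie within distance~$4$ of~$u$,'' which is not the right quantity: what controls the number of tails entering $\D_2(u)$ is the number of connected components of $F\cap\partial\D_2(u)$ (equivalently, how many arcs of the boundaries $\partial F^*_j$, over all $j$, and how many free-boundary segments of blocking areas abut $\D_2(u)$). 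The paper bounds this by a different packing argument --- only $O(1)$ unit discs fit along the circle $\partial\D_2(u)$, hence $F\cap\partial\D_2(u)$ has $O(1)$ arcs. You also leave untreated the vertical-ray tails (which may exit $\D_2(u)$), the blockable-edge paths that traverse $\D_2(t)$ for a blocker $t$ that is \emph{not} an endpoint of the edge, and the special blockable edges that span two blocking areas.

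Even if planarity does hold (I expect it does), the detour buys you nothing: once you have the $O(1)$-arcs-per-aura fact that any correct routing argument would need, you already have a constant degree bound on each vertex, and $|E^G|=O(m)$ follows immediately without Euler. That is exactly what the paper does: each start position borders $O(1)$ components of $F^*$ and contributes $O(1)$ edges per component; each target $t$ has $O(1)$ representative points because $\partial F^*_j\cap\D_2(t)$ has $O(1)$ components; and each blocking area has $O(1)$ free-boundary segments, each contributing at most four blockable edges. If you want to salvage your write-up, prove the degree bound directly and drop the planarity claim.
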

\begin{proof}
	By construction of $H$, we know that edges are created for every connected component of $F^*$ and then additional blockable edges are added for each blocking area. We argue that the number of both these two types of edges is bounded by $O(m)$.
	
	For each component $F^*_{j} \subset F^*$ the number of edges created is bounded by the target positions that reside in it plus the start positions that share borders of their auras with it.
	A start position can only add one or two edges per component of $F^*$ it borders.
	The edges that a target position contributes is slightly more complicated to analyze, since for a target $t$ we add a representative point for each connected component of $\partial F^*_{j} \cap \D_2(t)$.
	
	However, the number of connected components of $\partial F^*_{j} \cap \D_2(t)$ is constant, since a connected component of $\partial F^*_{j} \cap \D_2(t)$ will have to intersect $\partial \D_2(t)$ in two points.
	Each point $x \in F \cap \partial \D_2(t)$ in the free space requires that $\D_1(x) \cap \mathcal{O} = \emptyset $.
	We can only fit a constant number of unit circles on $\partial \D_2(t)$, therefore there can only be a constant number of segments of $\partial \D_2(t)$ that are in the free space $F$.
	Thus, the number of connected components of $\partial F^*_{j} \cap \D_2(t)$ is constant.
	
	By a similar reason, a start position only borders a constant number of components of $F^*$.
	So the number of edges created for all components of $F^*$ is bounded by $O(m)$.
	
	A blocking area adds at most four edges to $E^G$ per segment of free space boundary.
	By definition, a blocking area will intersect the free space boundary in at least two distinct connected components.
	However, a target can only intersect the free space boundary in a constant number of connected components.
	Therefore, a blocking area has a constant amount of free space boundary segments and all blocking areas in $B$ will add $O(m)$ edges.
\end{proof}

\begin{restatable}{lemma}{MotionGraphPaths}
	\label{lem:motion_graph_paths}
	Between any two vertices of the motion graph $G$, we can find a path in $O(m)$ time, and the corresponding path in the free space has complexity of $O(m + n)$.
\end{restatable}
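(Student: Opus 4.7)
The claim has two parts: the running time of finding a path in $G$, and the geometric complexity of the corresponding trajectory in $F$. The plan is to treat them separately, starting from the structural bounds already established.

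\textbf{Path in the motion graph.}
First I would observe that $|V^G|=2m$ and, by Lemma~\ref{lem:motion_graph_edges}, $|E^G| = O(m)$. Given that $G$ is connected (by the preceding lemma), running a breadth-first search from one of the two given vertices produces a simple path to the other in time linear in the size of the graph, i.e.\ $O(m)$. No sophisticated data structures are needed here; the only subtlety is that the motion graph is built once during preprocessing, so subsequent path queries can reuse it.

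\textbf{Complexity of the free-space trajectory.}
Let $u=v_0,v_1,\ldots,v_k=v$ be the simple path in $G$ returned by BFS, where $k=O(m)$. For each edge $(v_{i-1},v_i)$ I would concatenate the free-space path $\pi_{v_{i-1}v_i}$ described in the edge-to-path translation: recall each such path consists of (a)~a segment inside the aura of an endpoint, (b)~possibly a vertical ray segment, (c)~either an arc of $\partial F^*_j$ between two consecutive representative points of $\Lambda_j$ (for a guaranteed edge) or a free boundary segment of a blocking area (for a blockable edge), and (d)~again a short segment inside the aura of the other endpoint. Parts (a), (b), and (d) each contribute $O(1)$ complexity per edge, so summed over the $O(m)$ edges they contribute $O(m)$.

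\textbf{The main obstacle: bounding the boundary-following segments.}
The nontrivial part is bounding the total complexity of the pieces of type~(c), which can individually be as large as $\Theta(n)$. Here the key observation is that, because the path in $G$ is simple, each guaranteed edge it traverses inside a given component $F^*_j$ corresponds to a distinct pair of consecutive entries of the circular list $\Lambda_j$, and these boundary arcs are pairwise interior-disjoint portions of $\partial F^*_j$. Hence, summing over all edges of the path, the total length of these arcs is at most the total boundary complexity of all components $F^*_j$, plus the constant-complexity free boundary segments used by blockable edges. By Lemma~\ref{lem:free-space-complexity}, the complexity of $F^*$ and of the remote components is $O(m+n)$, so the type~(c) contributions sum to $O(m+n)$. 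Combining with the $O(m)$ contribution from types (a), (b), (d), the resulting free-space path has total complexity $O(m+n)$, completing the proof.
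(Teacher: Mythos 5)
Your overall approach matches the paper's: bound $|V^G|$ and $|E^G|$ by $O(m)$, run BFS for the $O(m)$ query time, then control the free-space trajectory by observing that the boundary-following arcs are disjoint pieces of $\partial F^*_j$. The part about the arcs of type~(c) being pairwise interior-disjoint because the BFS path is simple is exactly the paper's ``a section of the free space boundary will only be taken once'' observation, just phrased more explicitly.

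The one place where your argument has a genuine gap is the claim that ``Parts (a), (b), and (d) each contribute $O(1)$ complexity per edge.'' That is not true in general. The connector from a position $u$ to its representative point $p_u$ is, in the cases where $p_u$ lies on $\partial F^*_j \cap \D_2(u)$, an unobstructed path through $\D_2(u) \cap F$, and the portion of $\partial F$ inside a single aura can already have $\Theta(n)$ vertices; nothing in the construction makes this connector a straight segment or a constant number of arcs. The only sub-case where the contribution is genuinely $O(1)$ is the vertical-ray construction. The paper's own argument instead bounds the \emph{number} of times each connector is traversed (each position-to-representative-point connector is used at most twice along the path, once in each direction), and then charges their aggregate complexity, together with the boundary arcs, to the $O(m+n)$ complexity of the free-space subsets established in Lemma~\ref{lem:free-space-complexity}. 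Your proof would be complete if you replaced the per-edge $O(1)$ claim by this aggregate charging argument: the connectors are $O(m)$ in number, each reused at most twice, and their union lies on the $O(m+n)$-complexity boundary of $F^*$, $\overline{F}$, and $R$.
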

\begin{proof}
	From Lemma~\ref{lem:motion_graph_edges}, we know that the number of edges are bounded by $O(m)$.
	Using a simple path-finding algorithm, like a breadth-first search (BFS), a path can be found between two vertices in $O(|V^G| + |E^G|) = O(m)$.
	
	Any path between two vertices in the motion graph will take at most $2m - 1$ 
	edges.
	The generated path in the free space, after the translation, will consist of the free space boundary and the portion between a position and a representative point.
	Crucially, a section of the free space boundary will only be taken once in any path, while the path between a position and its representative point is taken either once or twice (e.g. potentially in both directions).
	Therefore, using Lemma~\ref{lem:free-space-complexity} the entire path through the free space will have complexity $O(m + n)$.
\end{proof}

\begin{restatable}{lemma}{MotionGraphComplexity}
	\label{lem:motion_graph_complexity}
	The motion graph $G$ can be created in $O(mn + m^2)$ time.
\end{restatable}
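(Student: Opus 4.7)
The plan is to walk through the construction of $G$ from Section~\ref{sec:motion_graph} and bound the cost of each step, showing that the bottleneck is the vertical ray-shooting used to produce representative points. By Lemma~\ref{lem:free-space-complexity}, I first obtain $F^*$, $\overline{F}$, and the remote components $R$ (and therefore the blocking areas $B\subseteq R$) with total complexity $O(m+n)$; this preprocessing takes $O((m+n)\log(m+n))$ time and is absorbed into the final bound.

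Next, for each connected component $F^*_j$, I construct the ordered list $\Lambda_j$ in a single traversal of $\partial F^*_j$. Case~(i) emits representative points on each maximal arc of $\partial F^*_j\cap \D_2(t)$; since by the counting argument used in Lemma~\ref{lem:motion_graph_edges} each target contributes only a constant number of such arcs per component, these points can be produced in $O(m+n)$ total time across all components. Case~(ii) requires shooting a vertical ray upward from each of the $O(m)$ interior positions and finding the first aura or piece of $\partial F^*_j$ it hits. Since the arrangement defining $F^*$ has size $O(m+n)$, a brute-force linear scan answers one ray query in $O(m+n)$ time, yielding $O(m^2+mn)$ total. Case~(iii) is handled analogously in $O(m+n)$ per touching start position. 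By Lemma~\ref{lem:motion_graph_edges} the final $\Lambda_j$'s together contain only $O(m)$ representative points, so inserting the consecutive-pair edges into $E^G$ and removing duplicates or self-loops costs $O(m)$ time via sorting or hashing.

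For the blockable edges, I observe that each blocking area $b$ has only a constant number of free-boundary segments, by the same counting argument used in Lemma~\ref{lem:motion_graph_edges}, so the $O(m)$ blocking areas contribute only $O(m)$ such segments altogether. For each endpoint $x$ of such a segment, the incident positions are either the adjacent start of case~(i), which is read off directly, or the predecessor and successor of $x$ in the already-computed $\Lambda_j$ of case~(ii), which can be retrieved in $O(1)$ time once $\Lambda_j$ is laid out. The empty-$\Lambda_j$ subcase requires walking along $\partial F^*_j$ from $x$ to the next blocking area $b'$; each such walk is charged to the boundary piece it traverses, which is not reused by another blockable-edge query, so these walks sum to $O(m+n)$ overall.

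The hard part, and the step that gives the claimed bound, is case~(ii) of the $\Lambda_j$-construction: ray shooting through an arrangement of size $O(m+n)$ from $O(m)$ sources, at cost $O(m(m+n))$. Every other step is $O(m+n)$ or dominated by this one, so the total running time is $O(mn+m^2)$, as claimed.
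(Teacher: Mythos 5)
Your proof is correct and follows the same basic structure as the paper's: bound the number of edges by $O(m)$ (via Lemma~\ref{lem:motion_graph_edges}), bound the per-edge construction cost by the $O(m+n)$ complexity of $F^*$ and $B$ (via Lemma~\ref{lem:free-space-complexity}), and multiply. The paper states this quite tersely; you go further by walking through the three cases of the $\Lambda_j$ construction and pinpointing the vertical ray shooting of case~(ii) as the dominant step, which is a more explicit and arguably more convincing version of the same argument.
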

\begin{proof}
	As argued in Lemma~\ref{lem:motion_graph_edges}, the number of edges in the motion graph is bounded by $O(m)$.
	Each edge is computed using simple procedures, that are dependent on the components of $F^*$ and the blocking areas $B$.
	By Lemma~\ref{lem:free-space-complexity}, the subset $F^*$ of the free space and the set of blocking areas $B$ both have complexity $O(m + n)$.
	Thus, the entire procedure is bounded by $O(mn + m^2)$.
\end{proof}

\subsection{The algorithm}
\label{sec:divide_conquer}
We are now ready to describe our algorithm.
We use the residual components graph $H$, which is a tree, in order to split the problem into smaller subproblems, and recursively solve them.
Using $H$ we select a particular residual component of the free space, and solve the subproblem restricted to it using the motion graph.
Lemmas~\ref{lem:guaranteed_paths}--\ref{lem:motion_graph_complexity} will help us ensure that such reconfiguration is always possible.
One key point is to select a vertex of $H$, such that, after solving the subproblem in the corresponding residual component, no robots need to move across the incident blocking areas.
That is, we need to choose the residual components in such an order that we can ignore blockers in the solved residual components.

Recall that a charge $q(Q)$, for some $Q\subseteq F$, is the difference between the numbers of the start positions and the target positions in $Q$.
Initially, if there is an edge $e \in E^H$ such that removing $e$ splits $H$ into two subtrees with zero total charge each, then we remove $e$ from $H$ and recurse on the two subtrees.


Let us now assign an orientation to the edges of $H$ in the following way.
For each edge $e=(u,v)$, let $H_u$ and $H_v$ be the two trees of $H\setminus\{e\}$ containing $u$ and $v$ respectively.
We orient $e$ from $u$ to $v$ if $q(H_u)>0>q(H_v)$, and from $v$ to $u$ if $q(H_u)<0<q(H_v)$.
%
%

\begin{restatable}{lemma}{sink}
\label{lem:sink}
There exists at least one sink vertex in directed graph $H$.
\end{restatable}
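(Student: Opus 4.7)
The plan is to deduce the existence of a sink by a ``follow outgoing edges'' walk through $H$, leveraging the tree structure established in Lemma~\ref{lem:blocking_graph_tree}.

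First, I would argue that the orientation is well-defined on every edge that survives the preprocessing. At each recursive level we work with a single subtree of $H$ whose total charge is $0$: this invariant is preserved because we only remove an edge when both resulting subtrees have charge $0$ individually. For any edge $e=(u,v)$ remaining in such a subtree, $q(H_u)+q(H_v)=0$ and $q(H_u)\ne 0$ (otherwise $q(H_v)=0$ as well and $e$ would have been removed during preprocessing). Hence $q(H_u)$ and $q(H_v)$ have strictly opposite signs, and the rule in the statement assigns $e$ a unique direction.

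Next I would exhibit a sink by a walking argument. If $H$ consists of a single vertex, that vertex is trivially a sink. Otherwise, pick any vertex $v_0$; if it has no outgoing edge it is already a sink, so suppose it has one and move along it to a vertex $v_1$. Iterate this process, producing a directed walk $v_0, v_1, v_2, \ldots$ that keeps stepping along outgoing edges as long as the current vertex has one. Because $H$ is a tree (so its underlying graph is acyclic) and each edge carries a single orientation, the walk cannot revisit a vertex: a repetition $v_i = v_j$ with $i<j$ would trace out a cycle in the underlying tree through the pairwise distinct edges $\{v_k,v_{k+1}\}$ for $i \le k < j$, contradicting Lemma~\ref{lem:blocking_graph_tree}. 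Since $H$ is finite the walk must terminate, and the terminal vertex has no outgoing edge, i.e., it is a sink.

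The only genuine obstacle is verifying that the walking argument cannot loop back on itself, which reduces to acyclicity of the underlying undirected tree together with the fact that every edge carries exactly one orientation; once the orientation is seen to be well-defined after preprocessing, the rest is bookkeeping.
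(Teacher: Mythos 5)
Your proof is correct and follows essentially the same route as the paper: observe that every surviving edge has a well-defined orientation (since any edge yielding two zero-charge subtrees is removed during preprocessing), then use the tree structure from Lemma~\ref{lem:blocking_graph_tree} to conclude the oriented graph is acyclic and hence has a sink. The only difference is that you unroll the standard ``finite DAG has a sink'' fact into an explicit walking argument, whereas the paper invokes it directly.
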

\begin{proof}
Since there are no edges that split $H$ into zero-\weight{} subgraphs, every edge in $H$ has a defined direction.
The residual components graph $H$ is a tree, by Lemma~\ref{lem:blocking_graph_tree}, and thus $H$ is a directed acyclic graph (DAG) and must contain at least one sink vertex.
\end{proof}

Using Lemma~\ref{lem:sink}, our algorithm selects a sink node $\sigma$ of $H$.
The respective residual component $\overline{F}_\sigma$ is solved as follows.
First, all robots inside~$\overline{F}_\sigma$ are moved to unoccupied target positions.
Since all incident edges of $\sigma$ in $H$ are directed inwards, each edge requires one or more robot(s) to move into $\overline{F}_\sigma$.
The exact number can be computed from the charges of the subtrees of the adjacent residual components.
We then move the required number of robots from the adjacent residual components to $\overline{F}_\sigma$ over the corresponding blocking areas.
Consider the blocker $t$ associated with a blocking area $b$ incident to $\overline{F}_\sigma$.
If $t$ is occupied before the charge of $\overline{F}_\sigma$ becomes zero, then $t$ has to reside in~$\overline{F}_\sigma$, as the adjacent residual components were not yet processed by the algorithm.
Then we move the robot from $t$ to another unoccupied target in $\overline{F}_\sigma$, and the now unoccupied blocker position $t$ is the last target to become occupied by a robot moving across $b$.

Once all target positions of $\overline{F}_\sigma$ are filled, $\sigma$ and its incident edges are removed from $H$, and we recurse on the remaining subtrees.
Due to the way we select $\sigma$, and the number of robots that are moved into $\overline{F}_\sigma$, each subtree has a total zero-charge. 
See Algorithm~\ref{alg:single_comp_divide} for the pseudocode.

\begin{algorithm}[t]
	\centering
	\begin{algorithmic}
		\Procedure{MotionPlanningDivideConquer}{$\W, S, T$}
		\State Calculate free space $F$
		\State Compute blocking areas and create residual components graph $H = (V^H, E^H)$
		\State Create the motion graph $G = (V^G, E^G)$
		\If{there exists an edge $e \in E^H$ which divides $H$ into zero-\weight{} parts}
		\State Cut $H$ on $e$ and recurse on parts
		\Else
		\State Create a directed graph $D = (V^D, E^D)$ from $H$
		\State Find a sink node $\sigma \in V^D$
		\For{each start $s \in \sigma$}
		\State Find the nearest unoccupied target $t \in \sigma$
		\State Move a (chain of) robot(s) to occupy $t$ and free $s$
		\EndFor 
		\For{each edge $e \in E^D$ that points to $\sigma$}
		\For{each robot at start $s$ that needs to move in across $e$}
		\State Find an unoccupied target $t \in \sigma$
		\If{a blocker $t_b \in \sigma$ blocks movement from $s$ to $t$}
		\State Move a (chain of) robot(s) to occupy $t$ and free $t_b$
		\State Move a (chain of) robot(s) to occupy $t_b$ and free $s$
		\Else
		\State Move a (chain of) robot(s) to occupy $t$ and free $s$
		\EndIf
		\EndFor
		\EndFor
		\State Remove $\sigma$ from $H$ and $D$ and recurse on subproblems
		\EndIf
		\EndProcedure
	\end{algorithmic}
	\caption{
		The divide-and-conquer algorithm for a single free space component~$F$.}
	\label{alg:single_comp_divide}
\end{algorithm}

\begin{restatable}{theorem}{DivideConquerCorrectness}
\label{th:divide_conquer_correctness}
\label{lem:divide_complexity}
When the free space consists of a single connected component, our algorithm finds a solution to the unlabeled motion planning problem for unit-disc robots in a simple workspace, assuming monochromatic separation $\mu = 4$ and bichromatic separation $\beta = 2$. This takes $O(n \log n + mn + m^2)$ time.
\end{restatable}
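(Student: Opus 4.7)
My plan is to prove correctness by strong induction on the number $|V^H|$ of residual components, and to account for the running time by invoking the lemmas already established in Sections~\ref{sec:preliminaries} and~\ref{sec:motion_graph} together with the standard unlabeled pebble-game argument of Kornhauser et al.\ as used in Adler et al.~\cite{adler2015efficient}.

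For correctness I would first dispose of the ``zero-charge split'' branch: if some edge $e \in E^H$ separates $H$ into two subtrees each of total charge zero, then neither side ever needs to export a robot across $e$, so the two subproblems are independent instances of the same problem (restricted to the union of their residual components) and the inductive hypothesis applies to each. Otherwise every edge of $H$ carries a well-defined orientation, so by Lemma~\ref{lem:sink} a sink $\sigma$ exists. I would then argue that after the sink is processed, every remaining subtree of $H \setminus \{\sigma\}$ has charge zero (its charge was exactly the number of robots just imported into $\sigma$ across the corresponding blocking area), which lets the induction continue on a strictly smaller instance.

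The technical heart is to justify the ``process $\sigma$'' step, and this is the step I expect to be the main obstacle. Restricted to the residual component $\overline{F}_\sigma$, Lemmas~\ref{lem:guaranteed_paths} and~\ref{lem:unblocked_path} provide a connected sub-motion-graph of guaranteed edges on the positions lying inside $\overline{F}_\sigma$, so the unlabeled pebble-game algorithm can reshuffle the robots currently in $\overline{F}_\sigma$ to occupy as many targets inside $\overline{F}_\sigma$ as possible without ever crossing a blocking area. Next, for each inward blocking area $b$ with blocker $t$ I would import the required number of robots from the neighboring residual component by repeatedly traversing the blockable edge corresponding to $b$. The delicate point is that $t$ itself may already be occupied, which disables that blockable edge; the fix, encoded in Algorithm~\ref{alg:single_comp_divide}, is to first move the robot sitting on $t$ onto some still-unoccupied target inside $\overline{F}_\sigma$ (which must exist because $\sigma$ is a sink and therefore still has a net deficit that is supplied from this side), then route the incoming robots across $b$, and finally have the last imported robot settle on $t$. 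Because distinct blockers have disjoint auras under $\mu = 4$ (as in the proof of Lemma~\ref{lem:disjoint-blocking-paths}) and guaranteed paths inside $\overline{F}_\sigma$ never enter a blocking area (Lemma~\ref{lem:unblocked_path}), the internal shuffling and the crossings of different blocking areas do not interfere with one another.

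For the running time, computing the free space $F$ takes $O(n \log n)$ time; computing the residual components, the blocking areas, and the graph $H$ takes $O((m+n)\log(m+n))$ time by Lemma~\ref{lem:free-space-complexity}; and constructing the motion graph $G$ takes $O(mn + m^2)$ time by Lemma~\ref{lem:motion_graph_complexity}. The scheduling itself is an unlabeled pebble game on a graph with $O(m)$ vertices and $O(m)$ edges (Lemma~\ref{lem:motion_graph_edges}), solvable in $O(m^2)$ time; each scheduled motion-graph edge is then realized by a precomputed free-space path as described in Section~\ref{sec:motion_graph}. Summing these costs gives the claimed $O(n \log n + mn + m^2)$ bound.
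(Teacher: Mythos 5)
Your proposal matches the paper's proof in structure and all the essential ideas: induction on $|V^H|$, the zero-charge split branch, selecting a sink via Lemma~\ref{lem:sink}, reshuffling within $\overline{F}_\sigma$ using only guaranteed edges, importing across inward blocking areas while temporarily relocating an occupied blocker to a still-unoccupied target (whose existence you justify the same way, via the sink's deficit), and the running-time breakdown via Lemmas~\ref{lem:free-space-complexity}, \ref{lem:motion_graph_edges}, \ref{lem:motion_graph_paths}, and~\ref{lem:motion_graph_complexity}. The only cosmetic difference is that you delegate the intra-component scheduling to Kornhauser's unlabeled pebble-game algorithm, whereas the paper phrases the same step directly as greedy chain-moves of robots along motion-graph paths; both give the $O(m^2)$ scheduling bound and the argument is otherwise identical.
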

\begin{proof}
We argue the correctness using the induction on the number of vertices of $H$.

If $H$ consists of a single residual component $\overline{F_j}$, then the algorithm treats $\overline{F_j}$ vacuously as a sink.
All target positions are then filled greedily by finding the nearest unoccupied target for each start, and by Lemma~\ref{lem:unblocked_path} there is an unblocked path in the motion graph between any two such positions.
Thus, it always finds a valid solution.

Otherwise, assume that $H$ consists of more than one residual component.
We assume that the algorithm is correct for any residual components graph $H'$ with fewer vertices than $H$ (induction hypothesis).
We now show that the algorithm will reduce $H$ into smaller subproblems which can be solved.

If there exists an edge $e \in E_H$ which divides $H$ into zero-\weight{} subtrees the algorithm cuts~$e$ from $H$ and recurses on both subproblems.
Since both subtrees have strictly fewer vertices and an equal number of start and target positions, both subtrees can be solved according to the induction hypothesis.

If no such edge exists, the algorithm finds a ``sink'' residual component $\sigma$ where all adjacent edges in $H$ require robots to move into $\sigma$.
By Lemma~\ref{lem:sink}, there always exists such a sink component.
Since $\sigma$ has only edges pointed inwards, it means that all edges require one or more robots to move into $\sigma$ in order to fill its target positions.
This means the free space associated with $\sigma$ will have at least one more target position compared to start positions for every adjacent edge. 

Before moving robots into $\sigma$, the robots that already reside in $\sigma$ will be moved to target positions.
By Lemma~\ref{lem:unblocked_path}, there exists a path in the motion graph that cannot be blocked and thus this is always possible.
Afterwards, the required number of robots are moved in across every edge adjacent to $\sigma$ in $H_i$.
These paths can cross blocking areas, however, only targets inside $\sigma$ can be occupied.
Thus, if the robots have to cross the blocking area of an occupied blocker $b_t$ inside $\sigma$, the algorithm will first move the robot at $b_t$ to the original target position (which is always possible by Lemma~\ref{lem:unblocked_path}) and then move the robot outside $\sigma$ to $t_b$ instead. 

Once $\sigma$ is completely solved, the algorithm will remove $\sigma$ and all its adjacent edges from $H_i$.
All remaining connected components are strictly smaller and have an equal number of start and target positions, thus we have assumed the algorithm is able to solve them.

For the running time, similarly to before, calculating the free space component $F_i$ and the blocking areas $B_i$ is bounded by $O((m+n)\log(m+n))$ by Lemma~\ref{lem:free-space-complexity}.
The motion graph can then be calculated in $O(mn + m^2)$ by Lemma~\ref{lem:motion_graph_complexity}.
The residual components graph can be easily computed from the calculated components, and can similarly be generated within $O((m+n)\log(m+n))$.

In total, the algorithm will calculate a path for each target position, which is bounded by $O(m)$ by Lemma~\ref{lem:motion_graph_paths}, thus calculating all paths can be done in $O(m^2)$.
A path sometimes requires the additional movement of a robot at a blocker position, but this does not influence the $O(m)$ bound.
The resulting paths for each robot will have complexity $O(m+n)$, therefore the entire motion plan generated will have complexity $O(mn)$.

The number of iterations for the recursive algorithm is bounded by $O(m)$, since it always either occupies a target position or removes an edge from $H_i$.
Since $H_i$ is a tree by Lemma~\ref{lem:blocking_graph_tree}, the number of edges is bounded by $O(m)$.
Detecting an edge in $H_i$ to remove can be done efficiently in $O(m)$ by rooting the graph $H_i$ and for each vertex storing the \weight{} of the subtree rooted at that vertex.
An edge that splits $H_i$ into zero-\weight{} parts will then correspond to an edge where the subtree rooted at the ``child'' vertex has a \weight{} of zero.
Similarly, storing the \weight{} for each subtree of $H_i$ allows us to find the sink vertex $\sigma$ efficiently.
After $\sigma$ is solved and removed, updating the \weight{} can be done in $O(m)$.

Thus, the algorithm can find a solution in $O((m+n)\log(m+n) + mn + m^2)$.
\end{proof}


We now show how to extend our approach to $\beta=0$.
In the above algorithm, we use the bichromatic separation $\beta = 2$ to ensure that at every moment in time any subset of nodes of the motion graph can be occupied by robots.
If $\beta<2$ we can no longer assume that any start position and any target position can be occupied at the same time.
Nevertheless, even when $\beta=0$, observe that, due to $\mu=4$, for a pair of start and target positions $s_i$ and $t_j$ such that $|s_i t_j|<2$, no other target position $t_k$ can lie in $\D_2(s_i)$, and no other start position $s_\ell$ can lie in $\D_2(t_j)$.
Thus, there is a guaranteed path from $s_i$ to $t_j$.
This can be exploited to adjust the motion graph and the algorithms for $\beta = 0$.
Specifically, for each pair of such $s_i$ and $t_j$, we create a single target node in our motion graph, we move the robot from $s_i$ to $t_j$, and adjust our algorithm to work for the case of different number of start and target nodes in the motion graph.

\begin{lemma}
	\label{lem:start-target-pairs}
	For any position $u \in S \cup T$, there is at most one other position $v \in S \cup T$, $v \neq u$, that resides inside $\D_2(u)$.
\end{lemma}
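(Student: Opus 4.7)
The plan is to do a short case analysis on the type of $u$ (start or target) and exploit only the monochromatic separation bound $\mu = 4$ together with the triangle inequality; the bichromatic bound plays no role here.

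First I would observe that no position of the same type as $u$ can lie in $\D_2(u)$. Indeed, if $u$ and $v$ are both start positions (or both target positions), then monochromatic separation gives $|uv| \geq 4 > 2$, so $v \notin \D_2(u)$ since $\D_2(u)$ is the open disc of radius $2$. This immediately rules out more than one ``same-type'' neighbour of $u$ inside $\D_2(u)$.

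Next I would show, by contradiction, that at most one position of the opposite type can lie in $\D_2(u)$. Without loss of generality suppose $u \in S$, and suppose two distinct targets $t_1, t_2 \in T$ both lie in $\D_2(u)$. Then $|u t_1| < 2$ and $|u t_2| < 2$, so by the triangle inequality
\[
|t_1 t_2| \;\leq\; |t_1 u| + |u t_2| \;<\; 2 + 2 \;=\; 4,
\]
contradicting monochromatic separation $\mu = 4$ between the two targets. The symmetric argument (swapping the roles of $S$ and $T$) handles the case $u \in T$.

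Combining the two observations, any position $v \neq u$ inside $\D_2(u)$ must be of the opposite type, and there can be at most one such $v$. There is no real obstacle in this proof; the only thing to double-check is that the strict-versus-weak inequalities line up correctly (the open disc $\D_2(u)$ forces $<2$, and $\mu = 4$ forces $\geq 4$), which is exactly what makes the triangle-inequality bound $<4$ a genuine contradiction.
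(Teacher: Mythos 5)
Your proof is correct and follows essentially the same approach as the paper: both first use $\mu=4$ to rule out same-type positions in $\D_2(u)$, then apply the triangle inequality to two opposite-type positions to derive a contradiction with $\mu=4$. The only cosmetic difference is that you state the same-type exclusion as an explicit first step rather than folding it into the WLOG assumption.
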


\begin{proof}
	Assume for contradiction that there exists some position $u$ which has more than one start/target position inside its aura.
	Assume without loss of generality that $u$ is a start position.
	Since $\mu = 4$, we know that the positions in $\D_2(u)$ must be target positions. Let $v, w$ be two such target positions.
	Using the triangle inequality, we have that $d(v, w) \leq d(v, u) + d(u, w) < 2 + 2 < 4$, given that the aura is defined as an open set. This contradicts the monochromatic separation $\mu = 4$, since $v$ and $w$ are both targets.
\end{proof}

\begin{lemma}
	\label{lem:path_b0}
	For any two positions $x, y \in S \cup T$ such that $y \in \D_2(x)$ and $x \neq y$, there exists a path $\pi \subset F \cap \D_2(x) \cap \D_2(y)$ connecting $x$ and $y$.
\end{lemma}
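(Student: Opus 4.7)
The plan is to first apply Lemma~\ref{lem:start-target-pairs}: since $y\in\D_2(x)$ and $x\neq y$, the two positions must be of opposite types (one start, one target), and $0<|xy|<2$. A path in~$F$ can connect $x$ and $y$ only if they lie in a common free-space component~$F_i$, so I work under this hypothesis throughout. Set $L:=\D_2(x)\cap \D_2(y)$; this lens is a non-empty convex region containing both $x$ and~$y$, and the task reduces to showing that $L\cap F_i$ has $x$ and~$y$ in the same path-component.

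Next I would invoke Lemma~\ref{lem:adler2} applied to~$x$ to obtain a path $\pi_x\subset \D_2(x)\cap F_i$ from $x$ to $y$. If $\pi_x\subset \D_2(y)$, we are done. Otherwise $\pi_x$ has one or more excursions outside $\D_2(y)$, each beginning and ending at points $p,q\in \partial \D_2(y)\cap F_i\cap \D_2(x)$. By Lemma~\ref{lem:adler2} applied to~$y$ the set $\D_2(y)\cap F_i$ is connected, which supplies a replacement path from $p$ to $q$ inside $\D_2(y)\cap F_i$; the plan is to substitute each excursion with such a replacement to obtain a path from $x$ to $y$ that lies inside $\D_2(y)\cap F_i$ and, crucially, also remains inside $\D_2(x)$.

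The heart of the argument, and the step I expect to be the principal obstacle, is showing that the replacements can be chosen to stay inside $\D_2(x)$ as well, so that the concatenated path lies in $L\cap F_i$. The plan is to combine three ingredients: the convexity of~$L$, the simple-connectedness of~$F_i$ from Lemma~\ref{lem:adler1}, which makes each excursion and any candidate replacement freely homotopic in~$F_i$ rel endpoints, and a geometric claim that any arc of $\partial F_i$ inside~$L$ which separates~$x$ from~$y$ within~$L$ must, because $F_i=\W\ominus \D_1$ for a simple polygon~$\W$, already separate them in~$F_i$ globally---contradicting $x,y\in F_i$. Establishing this last claim precisely is what will require the real work, via a case analysis of how the offset boundary $\partial F_i$ can cross the convex lens~$L$, and this is where the simple-polygon hypothesis on~$\W$ will be essential.
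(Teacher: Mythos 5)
Your opening moves mirror the paper's: invoke Lemma~\ref{lem:adler2} to obtain a path $\pi_x\subset\D_2(x)\cap F_i$ from $x$ to $y$, and reduce the question to showing that it can be kept inside the lens $L=\D_2(x)\cap\D_2(y)$. But from there the proposal defers precisely the step that constitutes the lemma: you state, and do not prove, that any arc of $\partial F_i$ inside $L$ which separates $x$ from $y$ within $L$ must already separate them in $F_i$ globally, and you flag yourself that establishing it ``will require the real work.'' That is a genuine gap, not a routine detail to be filled in. The excursion-replacement framing also does not clearly terminate: a replacement path chosen in $\D_2(y)\cap F_i$ via Lemma~\ref{lem:adler2} may itself exit $\D_2(x)$, and re-replacing those sub-excursions recurses unless you already know that $L\cap F_i$ joins $x$ to $y$ --- which is exactly what is to be proved.

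The paper's argument is shorter, constructive, and proves no separation claim at all. It takes the straight segment $\overline{xy}$; if it lies in $F$ we are done, and otherwise the key observation is quantitative: since $x,y\in F$, any obstacle point that is within distance $1$ of $\overline{xy}$ yet at distance at least $1$ from both $x$ and $y$ must lie inside a unit disc $K$ whose boundary passes through $x$ and $y$. This is where $|xy|<2$ does the real geometric work --- it forces $\D_1(x)\cap\D_1(y)\ne\emptyset$, so an obstacle cannot slip between the two robot discs --- and such a disc $K$ is contained in both auras $\D_2(x)$ and $\D_2(y)$ (up to the two antipodal boundary points, which the detour avoids). Replacing the blocked part of $\overline{xy}$ by the boundary of the non-free blob then yields a path that stays in the lens. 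You record $|xy|<2$ in your first paragraph but never use it afterwards; the route you sketch via Lemma~\ref{lem:adler1}, homotopy, and a case analysis on $\partial F_i\cap L$ is genuinely different from the paper's, but without a concrete place where $|xy|<2$ enters, it is unlikely to close.
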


\begin{proof}
	The proof is similar to Lemma 2 in Adler et al.~\cite{adler2015efficient}.
	Take $x, y \in S \cup T$ such that $y \in \D_2(x)$ and $x \neq y$.
	By this lemma, there exists a path $\pi' \subset F \cap \D_2(x)$ connecting $x$ and $y$.
	Assume the line segment $\overline{xy}$ does not lie in the free space, since otherwise the path $\pi = \overline{xy}$ stays within $F \cap \D_2(x) \cap \D_2(y)$.
	Take positions $x', y'$ on $\overline{xy}$ such that $x',y' \in F$ and the distance $||x-y||$ is minimized.
	Let $A$ be the area enclosed by $\pi \cup \overline{xy}$ and $A^* = A \setminus F$ be the part of $A$ which lies in the obstacle space.
	We claim that $A^* \subset \D_2(x) \cup \D_2(y)$.

\begin{figure}[t]
	\centering
	\includegraphics{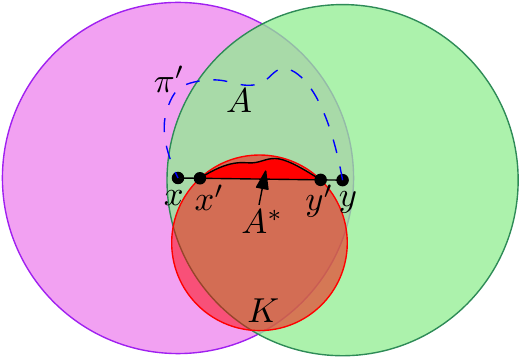}
	\caption{Illustration for Lemma~\ref{lem:path_b0}. The light green and pink discs represent the auras of $y$ and $x$ respectively, while $A^*$ is contained within the red unit-disc $K$.}
	\label{fig:path_b0}
\end{figure}

	Assume without loss of generality that $y$ lies directly to the right of $x$ and that $\pi$ above $\overline{xy}$.
	Let $K$ be the unit circle which intersects both $x$ and $y$. Note that $K$ must lie below $\overline{xy}$.
	The region $A^*$ must then be entirely enclosed within $K \cap A$, which must be within $\D_2(x) \cup \D_2(y)$.
	Thus, a path consisting of $\overline{xx'}$, $\partial A^*$, and $\overline{y'y}$ connects $x$ and $y$ through $F_i$ and stays within $\D_2(x) \cap \D_2(y)$.
	See Figure~\ref{fig:path_b0} for an illustration.
\end{proof}

We are now ready to prove the following theorem.

\begin{restatable}{theorem}{Theorem}
\label{th:no_bichromatic_separation}
When the free space $\F$ consists of a single component,
the algorithm finds a solution to the unlabeled motion planning problem for unit-disc robots in a simple workspace, assuming monochromatic separation $\mu = 4$. This takes $O(n \log n + mn + m^2)$ time.
\end{restatable}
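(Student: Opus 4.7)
The plan is to reduce the $\beta = 0$ problem to the $\beta \geq 2$ case already handled by Theorem \ref{th:divide_conquer_correctness}, by first resolving every ``close pair'' of start and target positions that lie inside each other's auras. Call a pair $(s_i, t_j) \in S \times T$ with $|s_i t_j| < 2$ a \emph{close pair}. Lemma \ref{lem:start-target-pairs} implies that each $u \in S \cup T$ participates in at most one close pair, so the close pairs form a matching. For each close pair, Lemma \ref{lem:path_b0} provides a path $\pi_{ij} \subseteq F \cap \D_2(s_i) \cap \D_2(t_j)$ from $s_i$ to $t_j$. In a preprocessing phase I move, one close pair at a time and in arbitrary order, the robot at each such $s_i$ to $t_j$ along $\pi_{ij}$.

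These preprocessing moves never cause a collision: for any point $p \in \D_2(s_i) \cap \D_2(t_j)$ and any $x \in S \cup T$ with $x \notin \{s_i, t_j\}$, the triangle inequality together with $\mu = 4$ between same-type positions yields $|p\,x| > 2$, so $p$ lies outside the aura of every robot currently parked at a start or at a previously filled close-pair target. After preprocessing, let $S'$ and $T'$ be the remaining unused starts and unfilled targets, and let $P$ be the set of already filled close-pair targets. Lemma \ref{lem:start-target-pairs} guarantees $|s\,t| \geq 2$ for every $s \in S'$ and $t \in T'$ (otherwise $s$ and $t$ would form a second close pair), and analogously $|s\,t_j| \geq 2$ and $|t\,t_j| \geq 4$ for every $t_j \in P$. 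Thus, once we treat the robots parked at $P$ as immobile, the reduced instance satisfies $\mu = 4$ and $\beta \geq 2$.

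In the second phase I run the algorithm of Theorem \ref{th:divide_conquer_correctness} on the reduced instance, with the small adaptation that each $t_j \in P$ appears in the motion graph as a single merged start/target node: it contributes zero to every charge $q(\cdot)$, is never selected as the source or destination of a move, but still participates in the definitions of auras, remote components, blocking areas, and the residual components graph. Because all separation hypotheses required by Section \ref{sec:single-component} now hold, Lemmas \ref{lem:guaranteed_paths}--\ref{lem:motion_graph_complexity} and the sink-based divide-and-conquer correctness argument carry over to this modified setting.

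I expect the main obstacle to be verifying that these merged nodes do not break the two structural invariants used by the recursion, namely that the residual components graph is still a tree (Lemma \ref{lem:blocking_graph_tree}) and that at least one sink always exists (Lemma \ref{lem:sink}). Since each merged node is a zero-charge unit inside its own residual component, no orientation of any edge of the residual components graph ever forces the recursion to route a robot through the blocking area of a $t_j \in P$, so both invariants are preserved. Finally, identifying the close pairs and computing the paths $\pi_{ij}$ together take $O(mn + m^2)$ time, which is absorbed into the overall $O(n \log n + mn + m^2)$ bound of the second phase.
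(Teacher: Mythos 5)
Your approach is a mirror image of the paper's: you move each close-pair robot from $s_i$ to $t_j$ in a \emph{pre}-processing step and then run the $\beta \ge 2$ algorithm with the filled $t_j$'s parked as immobile merged nodes, whereas the paper keeps the robot at $s_i$ throughout the main phase (treating $s_i$ as both start and target) and moves it to $t_{s_i}$ only in a \emph{post}-processing step. The pre/post-processing moves are collision-free by the same triangle-inequality argument, so that part matches.

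The gap is in your justification that the main phase still succeeds with a robot parked at a \emph{target} position. The paper's choice of parking at the start $s_i$ gives a one-line argument: a start position is never a blocker by definition, so nothing changes in the residual-components machinery. In your version the parked robot is at a target $t_j$, and target positions \emph{can} be blockers. Your stated reason — ``each merged node is a zero-charge unit \dots\ so no orientation of any edge \dots\ ever forces the recursion to route a robot through the blocking area of a $t_j\in P$'' — does not hold: the orientation of an edge of $H$ is determined by the charges of entire subtrees on each side, not by the local charge of $t_j$, so a zero-charge merged node sitting next to a deficit subtree would still force a crossing of its blocking area, which an immobile robot could not permit. What actually saves the construction is a geometric fact you leave implicit: if $t_j$ is treated as a start for the purposes of aura subtraction (so $\D_2(t_j)\subseteq A(S')$), then $\Dminus_2(t_j)=(\D_2(t_j)\cap F)\setminus A(S')=\emptyset$, hence $t_j$ has no remote components and therefore no blocking areas at all — the same conclusion the paper reaches for $s_i$, now established geometrically rather than by definition. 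Your phrase that $t_j$ ``still participates in the definitions of auras, remote components, blocking areas'' is ambiguous on exactly this point; if $\D_2(t_j)$ is \emph{not} included in $A(S')$ then $t_j$ can be a genuine blocker whose blocking area needs to be crossed, and the recursion deadlocks. Make the subtraction explicit and replace the charge argument with the emptiness of $\Dminus_2(t_j)$, and your pre-processing variant becomes a valid alternative to the paper's post-processing.
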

\begin{proof}
By Theorem~\ref{th:divide_conquer_correctness}, the algorithm finds a solution assuming bichromatic separation $\beta = 2$.
We will now show how to adjust the algorithm and the construction of the motion graph once this assumption no longer holds.

Let $Q \subseteq S$ be the set of start positions for which there exists a target position in their aura.
By Lemma~\ref{lem:start-target-pairs}, we know that each $s \in Q$ has a unique target position $t \in T$ in its aura, which we will denote with $t_s$.
Let 
$T_Q = \{t_s \mid s \in Q\}$ 
be the set of target positions residing in the aura of some start in $Q$.
For the construction of the motion graph we ignore all target positions in $T_Q$.
The start positions in $Q$ will be handled the same as regular start positions.
This will leave all start positions and all targets with bichromatic separation of $\beta = 2$, meaning the motion graph can be generated as before. 

For the algorithm the only adjustment is that the start positions in $Q$ need to be treated as both a start and target.
For the divide-and-conquer algorithm, the start positions only play a role when solving the problem within each residual component.
However, we can again simply treat the start position as a target when solving a component.
The start position can never act as a blocker by definition, and thus no additional problems arise.

After the initial algorithms are finished, the robots will be at target positions $T \setminus T_Q$ and at all start positions in $Q$.
What remains is then to move the robots from the start positions in $Q$ to their matched targets in $T_Q$ such that at the end each target in $T$ is occupied.
By Lemma~\ref{lem:path_b0} for each start position $s \in Q$ there exists a path $\pi$ from $s$ to $t_s$ which stays within $F \cap \D_2(s) \cap \D_2(t_s)$.
Given the monochromatic separation $\mu = 4$, $\pi$ does not cross the aura of another position in $S \cup T$ besides $s$ and $t_s$.
Therefore, we can move the robot at $s$ to $t_s$ across $\pi$ without interference from another position.
Doing this for all starts in $Q$ will result in all target positions in $T$ being occupied.
\end{proof}


\section{Multiple free space components}
\label{sec:multiple-components}
In this section we consider the case where the free space $\F$ consists of multiple connected components.
Since a separation of $\beta = 3$ is necessary to guarantee a solution, we now assume separation bounds of $\mu = 4$ and $\beta = 3$.

Within each free space component we can use the algorithm from Section~\ref{sec:single-component}.
However, paths, that are otherwise valid, may be blocked by a robot from another component.
Consider an example in Figure~\ref{fig:blocking-components}.
Robot at position $s_{2}$ is blocking the movement of a robot from $s_{1}$ to $t_{1}$ in another free space component.
In this example, there is a simple solution: Move the robot away from $s_2$ to $t_2$ in the upper component, before moving the robot from $s_1$ to $t_1$ in the lower component.
In the following we prove that there always exists an order on the free space components such that the motion planning problem can be solved by solving the problem component by component in that order.
\begin{figure}[h]
    \centering
    \includegraphics{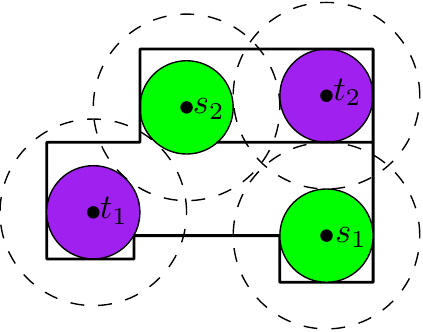}
    \caption{An example of a position ($s_2$) blocking movement ($s_1$ to $t_1$) in another free space component.}
    \label{fig:blocking-components}
\end{figure}

Our approach generalizes a procedure by Adler et al.~\cite{adler2015efficient} for determining an order in which the free space components can be solved. We first describe their procedure for the case $\mu = \beta = 4$.

Let $F, F'$ be two distinct components of $\F$, and let $x \in F$ be such that $\D_2(x) \setminus F' \neq \emptyset$.
We then call $x$ an \emph{interference position} from $F$ to $F'$, and define the \emph{interference set} from $F$ to $F'$ as $\IS(F, F') = \{x \in F : \D_2(x) \cap F' \neq \emptyset\}$.
We also define the \emph{mutual interference set} of $F, F'$ as $\mIS(F, F') = \IS(F, F') \cup \IS(F', F)$.
Intuitively, an element of the interference set from $F$ to $F'$ is a point in $F$ which, when a robot occupies it, could block a path in $F'$, and the interference set is the set of all such points.
The mutual interference set of $F, F'$ is the set of all single-robot positions in either component which might block a valid single-robot path in the other component.

To obtain the ordering of free space components, a directed graph representing the structure of $\F$ is defined, called the directed-interference forest $G = (V, E)$, where the nodes in $V$ correspond to the components $F$.
We add the directed edge $(F, F')$ to $E$ if either there exists a start position $s \in S$ such that $s \in \IS(F, F')$, or there exists a target position $t \in T$ such that $t \in \IS(F', F)$.
Intuitively, a directed edge $(F, F')$ shows that if $F$ is not solved before $F'$, interference will occur. 

In Lemma 3 of the paper by Adler et al.~\cite{adler2015efficient}, it is shown that for any mutual interference set $\mIS(F, F')$ and any two positions $x_1, x_2 \in \mIS(F, F')$ we have $\D_2(x_1) \cap \D_2(x_2) \neq \emptyset$.
Together with the separation constraints $\mu = \beta = 4$ that is assumed in their paper, there cannot be more than one start or target position in $\mIS(F, F')$.
This avoids cycles of length 2.
Since $\W$ is simple, cycles of length 3 or more are also impossible.
Thus, if $\mu = \beta = 4$, $G$ is a DAG and a topological ordering can be found that respects interference between components.

However, with a tighter separation of $\beta = 3$, the claim that the mutual interference set $\mIS(F, F')$ can contain at most one single start or target position is no longer valid.
Since $\mu$ remains 4, it is still true that the mutual interference set cannot hold two or more start positions or two or more target positions.
However, it does allow the mutual interference set to contain both a start \emph{and} a target position.
See Figure~\ref{fig:example_loop} for an example of this interference.
If both start and target positions in the mutual interference set belong to the same free space component, the directed graph $G$ contains a cycle of length 2.
This breaks the topological ordering.

\begin{figure}[t]
    \centering
    \includegraphics{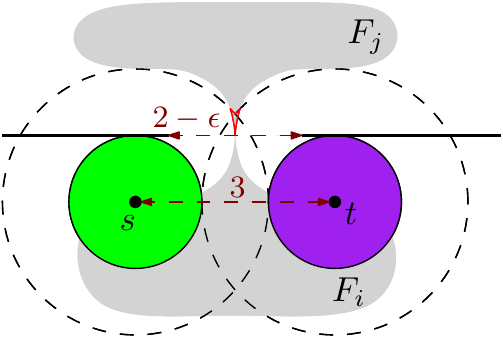}
    \caption{Example of start and target position both interfering with another free space component.}
    \label{fig:example_loop}
\end{figure}

Thus, for $\beta = 3$ the ordering breaks for the case where a start and a target positions in one free space component both interfere with another component.
In this case no direct ordering can be made since the component interferes when its start positions are occupied as well as when its target positions are occupied.

However, interference does not always affect the connectivity of the affected free space component.
Therefore, we define a \emph{remote blocker} as a target \emph{or} start position $x$ for which $\D_2(x)$ intersects the boundary of a free space component, other than the one it resides in, in more than one connected components.
By definition, all remote blockers between two free space components are in the mutual interference set.

\begin{restatable}{lemma}{noRemoteBlockers}
\label{lem:no-remote-blockers}
If the unlabeled motion planning problem has no remote blockers, then there is always a solution.
\end{restatable}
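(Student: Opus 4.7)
The plan is to extend the topological-ordering strategy of Adler et al.\ (which worked under $\mu=\beta=4$) to our setting. The critical observation furnished by the no-remote-blocker hypothesis is that for any $x\in S\cup T$ and any free-space component $F'$ not containing $x$, the set $\D_2(x)\cap F'$ is connected, so a robot placed at $x$ merely narrows $F'$ locally rather than splitting it. Consequently, each component $F$ will remain solvable by the single-component algorithm of Section~\ref{sec:single-component} even when the auras of occupied positions in other components are treated as extra, non-disconnecting obstacles in $F$.

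First, I will revisit the directed-interference graph $G$ of Adler et al. Under $\beta=3$ the only new obstruction to a topological ordering is cycles of length two, since by Lemma~3 of Adler et al.\ (which uses only $\mu=4$), each mutual interference set $\mIS(F,F')$ contains at most one start and at most one target. Hence a length-two cycle between $F$ and $F'$ must come from a start and a target that both lie in $F$ (Case~A) or both lie in $F'$ (Case~B) and both interfere with the other component. I break such a cycle by scheduling the ``clean'' component, i.e., the one without the offending start/target pair, first. Length-three-or-more cycles remain impossible by Adler et al.'s original argument, which uses only the simplicity of $\W$. After applying the cycle-breaking rule, $G$ is a DAG admitting a topological order.

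With a topological order in hand, the algorithm processes the components one by one. When the turn of $F$ comes, the positions occupied in other components are fixed (targets in earlier components and starts in later ones), and I will rebuild the residual-components graph and motion graph of Section~\ref{sec:motion_graph} for the reduced free space $F\setminus\bigcup_x \D_2(x)$, with $x$ ranging over these occupied external positions. Since no such $x$ is a remote blocker, $\D_2(x)\cap F$ is connected, so each subtraction only narrows $F$ without splitting it; in particular the residual-components graph stays a tree and, by the same reasoning as in Lemmas~\ref{lem:guaranteed_paths}--\ref{lem:motion_graph_complexity}, the motion graph of $F$ remains connected. The divide-and-conquer procedure of Section~\ref{sec:divide_conquer} then completes the solution inside $F$, and we pass to the next component.

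The main obstacle is verifying that the motion-graph construction of Section~\ref{sec:motion_graph} extends verbatim to the reduced free space: one must check that guaranteed and blockable paths can always be rerouted around the external narrowings, that no new blocking area is introduced by them, and that the cycle-breaking rule interacts correctly with the algorithm's charge invariants within each component. I expect this to reduce to a case analysis exploiting $\mu=4$, $\beta=3$, and the absence of remote blockers, mirroring (and slightly generalizing) the arguments used for the single-component case.
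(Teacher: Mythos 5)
The paper's proof of this lemma is far simpler than what you propose: it solves each free-space component independently with the Section~\ref{sec:single-component} algorithm, ignoring other components, and then observes that any path $\pi$ in $F'$ crossing the aura of an interfering position $x\in F$ can be rerouted along $\partial\D_2(x)$, precisely because $x$ not being a remote blocker means $\D_2(x)\cap\partial F'$ is a single connected arc, so $\D_2(x)$ only ``bites into'' $F'$ from one side. No topological ordering of components is needed, and the motion-graph machinery is used unchanged.

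Your proposal takes a genuinely different route—a topological-ordering-plus-rebuild strategy—that is both heavier and incomplete. Two concrete problems. First, the cycle-breaking step (``schedule the clean component first'') is asserted but not justified; in fact, once you have the rerouting observation in hand, the ordering is entirely unnecessary for this lemma, because rerouting around a non-remote-blocker aura works no matter which component is solved first or what the interfering robots happen to be doing. Second, and more seriously, the plan to rebuild the residual-components graph and motion graph over the carved-out region $F\setminus\bigcup_x\D_2(x)$ is exactly the hard part, and you flag it yourself as unverified. It is not clear that the residual-components graph remains a tree after subtracting external auras, nor that Lemmas~\ref{lem:guaranteed_paths}--\ref{lem:motion_graph_complexity} go through, since those proofs rely on the auras being centered at start positions \emph{inside} $F$ and on the specific structure of remote components of targets in $F$. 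You also state the key geometric fact as ``$\D_2(x)\cap F'$ is connected,'' whereas the hypothesis (no remote blocker) actually gives that $\D_2(x)\cap\partial F'$ has one component; the property you need for rerouting is rather that $F'\setminus\D_2(x)$ remains connected. The paper avoids all of this by never modifying the free space or the motion graph at all—it only modifies the \emph{paths} at the very end.
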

\begin{proof}
We can create a motion plan for each free space component separately, using the algorithm in Section~\ref{sec:single-component}.

Let $x$ be an element of the interference set $\IS(F, F')$ between free space components $F$ and~$F'$.
Given the separation bounds $\mu = 4$ and $\beta = 3$, the aura of $x$ cannot contain another start/target position, thus any robot path $\pi$ that intersects the aura of $x$ does so in at least two points.
In other words, any path through the aura of $x$ must also leave the aura.
Since $x$ is not a remote blocker, its aura intersects the boundary of $F'$ in one connected component.
Therefore, every path $\pi$ that crosses the aura of $x$ can be modified to use $\partial \D_2(x)$ instead.
If there is another element in $\IS(F, F')$, then its aura and the aura of $x$ together intersect  a connected part of $\partial F'$, since $x, x'$ are both in $F$. After modifying the paths for each element in the interference sets we will arrive at a valid solution.
\end{proof}\begin{proof}
We can create a motion plan for each free space component separately, using the algorithm in Section~\ref{sec:single-component}.

Let $x$ be an element of the interference set $\IS(F, F')$ between free space components $F$ and~$F'$.
Given the separation bounds $\mu = 4$ and $\beta = 3$, the aura of $x$ cannot contain another start/target position, thus any robot path $\pi$ that intersects the aura of $x$ does so in at least two points.
In other words, any path through the aura of $x$ must also leave the aura.
Since $x$ is not a remote blocker, its aura intersects the boundary of $F'$ in one connected component.
Therefore, every path $\pi$ that crosses the aura of $x$ can be modified to use $\partial \D_2(x)$ instead.
If there is another element in $\IS(F, F')$, then its aura and the aura of $x$ together intersect  a connected part of $\partial F'$, since $x, x'$ are both in $F$. After modifying the paths for each element in the interference sets we will arrive at a valid solution.
\end{proof}
Now the key geometric observation, that we prove in the lemma below, is that if auras of both $s$ and $t$ intersect $F'$, they cannot be both remote blockers of $F'$, and as a consequence we can still always find an order to resolve $F$ and $F'$.
%
%

\begin{lemma}
\label{lem:remote-blockers}
The interference set $\IS(F, F')$ of free space components $F, F'$ cannot contain a start position $x_1$ and target position $x_2$ that are both remote blockers of $F'$.
\end{lemma}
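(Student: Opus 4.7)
I would proceed by contradiction. Suppose $x_1 \in S$ and $x_2 \in T$ both lie in $\IS(F,F')$ and are both remote blockers of $F'$. Then $x_1, x_2 \in F$, so by the bichromatic separation $|x_1 x_2| \geq 3$. Applying Lemma~3 of Adler et al.~\cite{adler2015efficient}, whose proof uses only membership in $\mIS(F,F')$ together with connectedness of $F'$, yields $\D_2(x_1) \cap \D_2(x_2) \neq \emptyset$, so $|x_1 x_2| < 4$. Hence $|x_1 x_2| \in [3,4)$, and the two auras overlap in a non-empty lens region. Note that if instead both $x_1, x_2$ were starts (or both targets), the monochromatic separation $\mu = 4$ would already contradict $|x_1 x_2| < 4$; this is why the lemma is only non-trivial for one start and one target.

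Next I would exploit that $F'$ is simply connected by Lemma~\ref{lem:adler1}, so $\partial F'$ is a Jordan curve. Being a remote blocker of $F'$ means $\D_2(x_i) \cap \partial F'$ consists of at least two arcs, each with endpoints on the circle $\partial \D_2(x_i)$. Between two consecutive such arcs along $\partial F'$, the Jordan curve must exit and re-enter $\D_2(x_i)$; equivalently, $\D_2(x_i)$ contains a ``bridge'' of obstacle material that separates two sub-regions of $F' \cap \D_2(x_i)$ but is not reachable from $x_i$ within the aura. Since $x_i \in F$, any point of that bridge lies at distance $\geq 1$ from $x_i$ and strictly less than $2$ from $x_i$.

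The key geometric step would be to identify a common obstacle feature $q \in \D_2(x_1) \cap \D_2(x_2)$ that belongs to a bridge of both $x_1$ and $x_2$: because $\W$ is a simple polygon, the portion of obstacle space separating $F$ from $F'$ is a connected chain of $\partial \W$, and the overlap of the two auras forces the bridges to meet this chain in a common piece. Once $q$ is fixed we have $1 \leq |x_1 q|, |x_2 q| < 2$, and a careful angular argument at $q$---using that $x_1, x_2$ lie on the $F$-side of the chain and that the bridge orientation at $q$ is constrained by the two arcs of $\partial F' \cap \D_2(x_i)$ flanking $q$---would sharpen the triangle inequality to $|x_1 x_2| < 3$, contradicting the bichromatic separation.

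The main obstacle will be executing the last step rigorously. In principle, the two remote blockers could involve distinct features of the separating chain, so pinning down a shared $q$ requires careful analysis of the cyclic order of the arcs of $\D_2(x_i) \cap \partial F'$ along the Jordan curve $\partial F'$, combined with the aura overlap and the simple-polygon topology of $\W$. The angular argument itself must exploit that both bridges pass through $q$ with orientations that, together with $|x_i q| \geq 1$ forcing $x_1, x_2$ onto the same side of the chain, preclude $x_1$ and $x_2$ from being nearly antipodal around $q$ at distance close to $4$; this is precisely where the $\beta=3$ gap (rather than $\beta=4$) is just tight enough to close the argument.
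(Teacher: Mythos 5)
Your setup matches the paper's: assume for contradiction that a start $x_1$ and a target $x_2$ in $F$ are both remote blockers of $F'$, and use $\beta=3$ to get $d(x_1,x_2)\geq 3$. Observing via Adler et al.'s Lemma~3 that $\D_2(x_1)\cap\D_2(x_2)\neq\emptyset$ is a valid addition, though the paper does not actually need the upper bound $d(x_1,x_2)<4$. From there, however, your argument has a genuine gap in exactly the place you flag.

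First, the existence of a common obstacle point $q$ lying in a ``bridge'' of both $x_1$ and $x_2$ is asserted but not established: the two remote blockers may split $F'$ using disjoint pieces of the separating chain, and the overlap of the two auras does not by itself force the bridges to share a point. Second, even granting such a $q$ with $1\leq d(x_i,q)<2$, the triangle inequality alone gives only $d(x_1,x_2)<4$, and the promised ``angular argument'' that would tighten this to $d(x_1,x_2)<3$ is never made precise; you explicitly acknowledge it might not go through. The paper closes the argument differently: it picks extremal boundary points $y_1,y_2\in\partial F'$ inside the respective auras (leftmost for $x_1$, rightmost for $x_2$), uses the fact that the passage between $F$ and $F'$ has width $<2$ together with $d(x_1,x_2)\geq 3$ to conclude $d(y_1,y_2)<1$, extends $\overline{x_1y_1}$ and $\overline{x_2y_2}$ to a common point $z$ that must lie in $\W$, and then derives a contradiction from the extremality of $y_1,y_2$ (which forces nearby boundary points $w_1,w_2$ to lie on the wrong sides of the lines $x_iy_i$). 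This extremal-point-plus-narrow-passage argument is the substantive content of the lemma and is missing from your proposal, so what you have is a correct skeleton but not a proof.
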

\begin{proof}
Let $F$ be a free space component containing a start position $x_1$ and a target position $x_2$ and let $F'$ be another free space component such that $x_1, x_2 \in \IS(F, F')$.
For a contradiction, we assume that both positions are remote blockers for $F'$.
Without loss of generality, assume that $x_1$ and $x_2$ lie on a horizontal line, and that $x_1$ is to the left of $x_2$.
By our separation assumption $d(x_1,x_2)\geq 3$, where $d(\cdot, \cdot)$ is the Euclidean distance function.

As $x_1$ and $x_2$ are both remote blockers, their auras are split by the free space component $F'$ into multiple connected components.
Consider the components of $D_2(x_1)\setminus F'$ that do not contain $x_1$.
Let $y_1$ be the leftmost boundary point of $F'$ that lies in one of the components $D_2(x_1)\setminus F'$ not containing $x_1$.
Similarly, define $y_2$ to be the rightmost point on the boundary of $F'$ lying in one of the components $D_2(x_2)\setminus F'$ not containing $x_2$.
It then follows that the line segments $\overline{x_1 y_1}$ and $\overline{x_2 y_2}$ do not intersect.
See Figure~\ref{fig:proof-remote-blockers} for an illustration.

\begin{figure}[b!]
    \centering
    \includegraphics[page=2]{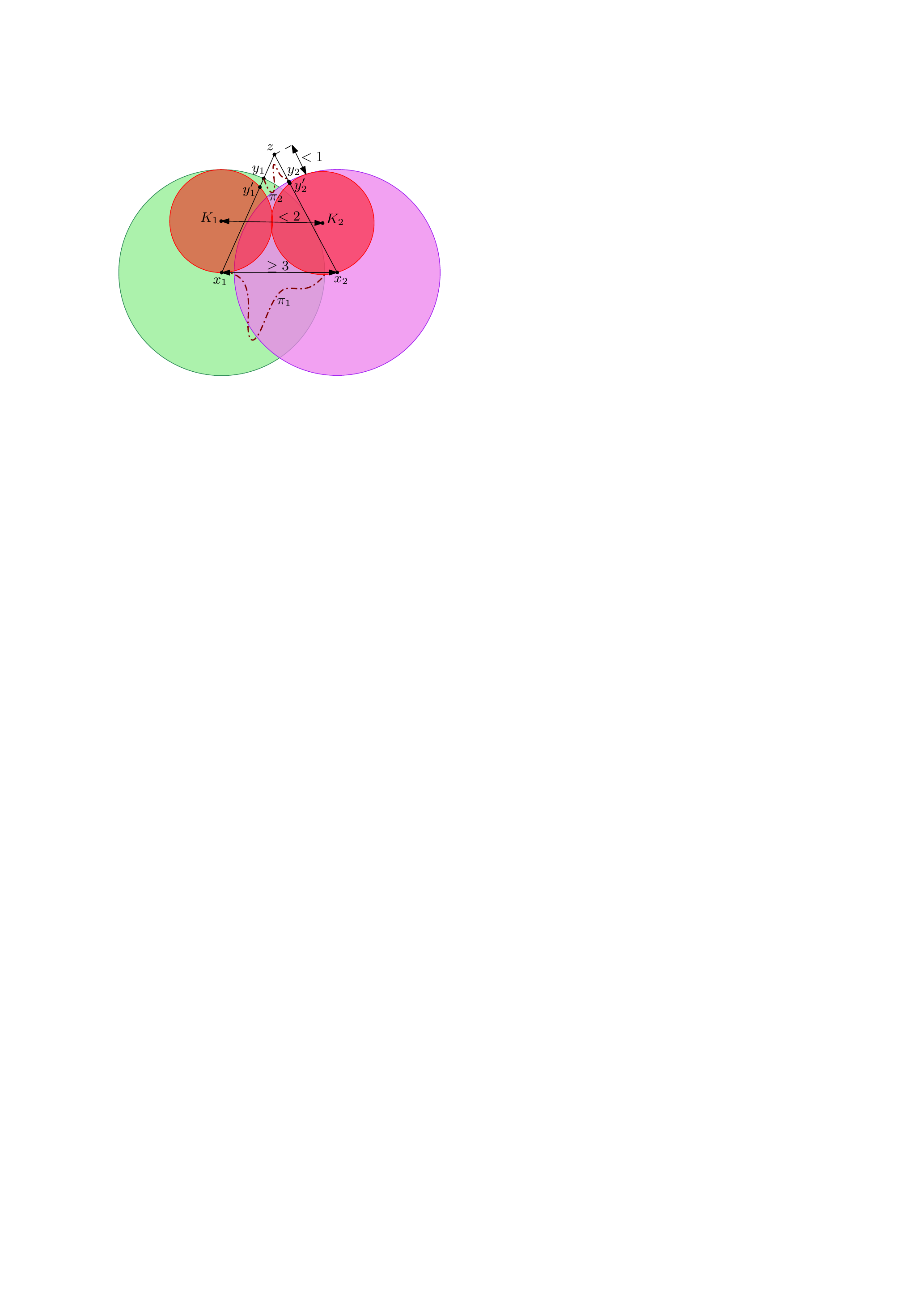} 
    \caption{Illustration of Lemma~\ref{lem:remote-blockers}. The aura $\D_2(x_1)$ is shown in green, $\D_2(x_2)$ in purple. 
    }
    \label{fig:proof-remote-blockers}
\end{figure}

Since $d(x_1,y_1)=2$, and the distance from $x_1$ and $y_1$ to any point on the boundary of the workspace $\partial\W$ is at least $1$, we know that the straight-line segment $\overline{x_1 y_1}$ lies completely in $\W$.
Similarly $\overline{x_2 y_2} \subset \W$.
%
As $F$ and $F'$ are separate connected components of the free space, the passage on the workspace between them is narrower than $2$.
Consider the closest pair of points $k_1$ and $k_2$ on $\partial\W$ on the opposite sides of this passage (see Figure~\ref{fig:proof-remote-blockers}).
The distance between them $d(k_1, k_2)$ is smaller than $2$.
Note that as $\overline{x_1y_1}\in\W$ and $\overline{x_2y_2}\in\W$, the straight-line segment $\overline{k_1k_2}$ intersects both $\overline{x_1y_1}$ and $\overline{x_2y_2}$.

Since $d(x_1, x_2) \geq 3$ and $d(k_1, k_2) < 2$, we can conclude that $d(y_1, y_2) < 1$.
Let $z$ be the intersection of the extensions of $\overline{x_1y_1}$ and $\overline{x_2y_2}$ beyond $y_1$ and $y_2$ respectively.
Then, as $d(x_1,x_2)\geq 3$, we get that $d(y_1,z) < 1$ and $d(y_2, z) < 1$.
Thus $z$ lies in the workspace $\W$.

Now, consider point $w_1$ on the boundary of $\W$ at distance $1$ from $y_1$, that is $w_1\in \D_1(y_1)\cap\partial \W$, and consider point $w_2\in \D_1(y_2)\cap\partial \W$.
First, observe, that as $y_1$ lies to the left (along the boundary of $F'$) of $y_2$, the point $w_1$ lies to the left along the boundary of $\W$.
By the choice of $y_1$ and $y_2$ we have that $w_1$ must lie to the right of the line passing through $x_1$ and $y_1$, and that $w_2$ must lie to the left of the line passing through $x_2$ and $y_2$.
Otherwise, $y_1$ (or $y_2$) would not be the leftmost (rightmost) boundary point of $F'$.
However, this contradicts the fact that $z$ lies inside the workspace.
Thus, $x_1$ and $x_2$ cannot be both remote blockers at the same time.
\end{proof}

Since the interference set cannot contain more than one remote blocker by Lemma~\ref{lem:remote-blockers}, we can find a topological ordering of solving the free space components such that the motion plan for a free space component $F$ is never blocked by a robot in a component $F' \neq F$.
Intuitively, since we cannot have both a start and target position as a remote blocker from one free space component to the other, there is a well-defined direction regarding remote blockers between two components.

Note that a position that is not a remote blocker can still interfere. However, any path in the motion plan that crosses the aura of an interfering position can be modified to use the boundary of the aura instead, see Lemma~\ref{lem:no-remote-blockers}. 

\begin{restatable}{theorem}{finalthm}
We are given $m$ unit-disc robots in a simple polygonal workspace $\W \subset \R^2$, with start and target positions $S, T$ and separation constraints $\mu = 4$ and $\beta = 3$.
Assuming each connected component $F$ of the free space $\F$ for a single unit-disc robot in $\W$ contains an equal number of start and target positions, there exists a collision-free motion plan for the robots starting at $S$ such that all target positions in $T$ are occupied after execution.
\end{restatable}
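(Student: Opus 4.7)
The plan is to generalize the topological-ordering approach of Adler et al.\ to the tighter bichromatic separation $\beta=3$, using a refined directed interference graph that records only remote-blocker relationships. First I would define a directed graph $G=(V,E)$ with one vertex per free-space component, and add a directed edge $(F,F')$ if some start position in $F$ is a remote blocker of $F'$, or if some target position in $F'$ is a remote blocker of $F$. Intuitively, an edge $(F,F')$ says ``$F$ should be resolved before $F'$'': in the first case, so that the blocking start position in $F$ is vacated before we need to route robots inside $F'$; in the second case, so that $F$ is finished before the target in $F'$ becomes occupied and splits $F$.

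The next step is to show $G$ is a DAG. Two-cycles are ruled out by Lemma~\ref{lem:remote-blockers} together with a symmetric application to the roles of $F$ and $F'$: if $(F,F')$ and $(F',F)$ both belonged to $E$, we would obtain, in one of the two ordered pairs, a start and a target from the same component that are both remote blockers of the other, contradicting Lemma~\ref{lem:remote-blockers}. Cycles of length three or more are ruled out by the simplicity of $\W$: a remote-blocker relation between two components is realized through a narrow corridor of $\W$ of width less than $2$, and since $\W$ is simply connected the dual graph of such corridors is a tree, so no cycle through three or more components can exist. This parallels the argument of Adler et al.\ for $\mu=\beta=4$ and uses only the planarity and simple connectedness of $\W$.

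Having established acyclicity, I would process the components in a topological order of $G$. For the current component $F$, I apply the single-component algorithm of Theorem~\ref{th:no_bichromatic_separation} to produce an intra-component motion plan. The choice of order guarantees that $F$'s free space is not split into multiple pieces by robots sitting outside $F$: any start in an unprocessed $F'$ that is a remote blocker of $F$ would force the edge $(F',F)$ and hence $F'$ would have already been processed, while any target in an already-processed $F'$ that is a remote blocker of $F$ would force $(F,F')$ and contradict our ordering. Interferences that are not remote blockers are handled by the detour argument of Lemma~\ref{lem:no-remote-blockers}: such an aura meets $\partial F$ in a single connected arc, so any path crossing the aura can be rerouted along its boundary without loss of validity. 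Concatenating the individual plans in the chosen order yields a global collision-free motion plan bringing every target in $T$ to be occupied.

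The main obstacle is acyclicity of $G$, specifically ruling out cycles of length at least three. Lemma~\ref{lem:remote-blockers} supplies the pairwise obstruction, but longer cycles require the global, topological observation that in a simply connected $\W$ the ``through-a-narrow-passage'' relation between components is tree-structured; carrying this out carefully, rather than appealing informally to Adler et al., is the delicate part. Once that is in place, the topological sort and the detour lemma glue the single-component algorithm into a correct overall procedure.
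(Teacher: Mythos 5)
Your proposal follows essentially the same route as the paper: build a directed graph on free-space components with edges recording remote-blocker relations, show it is a DAG using Lemma~\ref{lem:remote-blockers} and the simple connectivity of $\W$, process components in topological order via the single-component algorithm, and handle non-remote interferences by the boundary-detour argument of Lemma~\ref{lem:no-remote-blockers}. One small imprecision: your claim that a $2$-cycle ``would obtain, in one of the two ordered pairs, a start and a target from the same component that are both remote blockers of the other'' covers only two of the four ways a $2$-cycle could arise; the remaining two, in which the mutual interference set $\mIS(F,F')$ would contain two starts or two targets, are not excluded by Lemma~\ref{lem:remote-blockers} but rather by the monochromatic separation $\mu=4$ combined with the fact (Lemma~3 of Adler et al.) that any two points of $\mIS(F,F')$ have overlapping auras. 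The paper makes this $\mu=4$ observation explicitly in the text preceding Lemma~\ref{lem:remote-blockers}; adding it would close the gap.
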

\begin{proof}
We can create a motion plan for each connected component $F \subset \F$ of the free space using 
the algorithm discussed in Section~\ref{sec:single-component} 
(Theorem~\ref{th:divide_conquer_correctness}).
We can then use the fact that only a single position can be a remote blocker between two components of $\F$ by Lemma~\ref{lem:remote-blockers} to find an ordering for solving the free space components that respects remote blockers.

To obtain the ordering of free space components, a directed forest $G = (V, E)$ is created such that the nodes in $V$ correspond to the free space components $F \in \F$.
We add the directed edge $(F, F')$ to $E$ if either there exists a start position $s \in F$ such that $s$ is a remote blocker for $F'$, or there exists a remote blocker target position $t \in F'$ such that $t$ is a remote blocker for $F$.

Given the fact that an interference set cannot contain two remote blockers by Lemma~\ref{lem:remote-blockers} and the workspace $\W$ is simple, the graph $G$ is a DAG.
Therefore, $G$ has a topological ordering that respects remote blocking between components.

Lastly, the motion plan of one free space component might still encounter interference from other free space components.
But since these are not remote components, we are able to modify all paths that pass the aura of an interfering position to take the boundary instead, as explained in Lemma~\ref{lem:no-remote-blockers}.
\end{proof}


\section{Conclusion}
\label{sec:conclusion}

In this paper we presented an efficient algorithm for the unlabeled motion planning problem for unit-disc robots with sufficient separation in a simple polygon. Our result is optimal, in the sense that with less separation a solution may not exist. Nevertheless, there remain a number of challenging open problems. 

To prove the tightness of the separation bounds, we first constructed domains with straight-line segments and circular arcs as boundaries, and then obtained simple polygons by approximating these. This results in polygons of high complexity. An open question remains whether it is possible to prove the separation bounds with constant-complexity polygons.

Of course, a solution may still exist even if the separation bounds are violated. The complexity of the problem in this setting remains a challenging open problem. The general unlabeled motion planning problem in a polygonal environment with holes is PSPACE-complete~\cite{brocken2020multi,solovey2016hardness}. Does the restriction to unit-disc robots and/or simple domains make the problem tractable, in particular if we still enforce some small separation bound?

What challenges arise when the workspace is no longer simple, but rather contains obstacles?
Intuitively, obstacles seem to pose an issue when defining an ordering for solving multiple free space components, since positions can interfere between components at multiple locations. Are there conditions, similar to the separation bounds, which can guarantee a solution (together with an efficient algorithm) for unlabeled multi-robot motion planning amidst obstacles?

\subparagraph{Acknowledgments}
This research was initiated at the Lorentz-Center Workshop on Fixed-Parameter Computational Geometry.
Mark de Berg is supported by the  Dutch Research Council (NWO) through Gravitation project NETWORKS  (grant no. 024.002.003).
Dan Halperin is supported in part by the Israel Science Foundation (grant no.~1736/19), by NSF/US-Israel-BSF (grant no.~2019754), by the Israel Ministry of Science and Technology (grant no.~103129), by the Blavatnik Computer Science Research Fund, and by the Yandex Machine Learning Initiative for Machine Learning at Tel Aviv University.
Yoshio Okamoto is supported by the JSPS KAKENHI Grant Numbers JP20H05795 and JP20K11670.

\bibliography{citations}

\end{document}